\journal{Journal of Computational and Applied Mathematics}
\newcommand{\qedfill}{\hfill\ensuremath{\triangleleft}}
\renewcommand{\Re}{\mathbb{R}}
\newcommand{\Fi}{\mathbb{F}}
\newcommand{\PhiMat}{\Phi_{m,p,n}}
\newcommand{\BigO}{\mathcal{O}}
\newcommand{\Bil}{\mathrm{Bil}}
\newcommand{\trace}{\mathrm{trace}}
\newcommand{\Perm}{\mathfrak{S}}
\newcommand{\Gl}{\mathrm{GL}}
\newcommand{\UU}{\mathrm{\mathbf{U}}}
\newcommand{\VV}{\mathrm{\mathbf{V}}}
\newcommand{\WW}{\mathrm{\mathbf{W}}}
\renewcommand{\AA}{\mathrm{\mathbf{A}}}
\newcommand{\BB}{\mathrm{\mathbf{B}}}
\newcommand{\CC}{\mathrm{\mathbf{C}}}
\newcommand{\PP}{\mathrm{\mathbf{P}}}
\newcommand{\QQ}{\mathrm{\mathbf{Q}}}
\newcommand{\RR}{\mathrm{\mathbf{R}}}
\newcommand{\II}{\mathrm{\mathbf{I}}}
\newcommand{\MM}{\mathrm{\mathbf{M}}}
\newcommand{\cl}{\mathrm{cl}_\oplus}
\newcommand{\colspan}{\mathrm{colspan}}
\newcommand{\rank}{\mathrm{rank}}
\renewcommand{\aa}{\mathrm{\mathbf{a}}}
\newcommand{\qq}{\mathrm{\mathbf{q}}}
\newcommand{\GGc}{\mbox{\upshape\sffamily G}}
\newcommand{\diag}{\mathrm{diag}}
\newcommand{\SSc}{\mathscr{S}}
\newcommand{\xx}{\mathrm{\mathbf{x}}}
\newcommand{\yy}{\mathrm{\mathbf{y}}}
\newcommand{\XX}{\mathrm{\mathbf{X}}}
\newcommand{\ZZ}{\mathrm{\mathbf{Z}}}
\newcommand{\dime}{\mathrm{dim}}
\newcommand{\AAt}{\tilde{\AA}}
\newcommand{\spann}{\mathrm{span}}
\newcommand{\UUt}{\tilde{\UU}}
\newcommand{\VVt}{\tilde{\VV}}
\newcommand{\WWt}{\tilde{\WW}}
\newcommand{\vecc}{\mathrm{vec}}
\newcommand{\Inj}{\mbox{\upshape\sffamily Inj}}
\newcommand{\Range}{\mathrm{Range}}
\newcommand{\ZZb}{\mathbb{Z}}
\newcommand{\ND}{\mbox{\upshape\sffamily ND}}
\newcommand{\PhiT}{\tilde{\Phi}}
\newcommand{\myF}{{[F]}}
\newcommand{\round}{\mathrm{round}}
\theoremstyle{plain}
\newtheorem{theorem}{Theorem}[section]
\newtheorem{proposition}[theorem]{Proposition}
\newtheorem{lemma}[theorem]{Lemma}
\theoremstyle{definition}
\newtheorem{definition}{Definition}[section]
\newtheorem{assumption}{Assumption}[section]
\theoremstyle{remark}
\newtheorem{remark}{Remark}[section]
\newtheorem{example}{Example}[section]
\begin{document}

\begin{frontmatter}

\title{Equivalent Polyadic Decompositions of\\Matrix Multiplication Tensors\tnoteref{mytitlenote}}
\tnotetext[mytitlenote]{The work of the G.~Berger was supported by (i) the Fonds de la Recherche Scientifique -- FNRS and the Fonds Wetenschappelijk Onderzoek -- Vlaanderen under EOS Project no 30468160, (ii) ``Communaut\'e fran\c{c}aise de Belgique -- Actions de Recherche Concert\'ees'' (contract ARC 14/19-060).
L.~De Lathauwer and M.~Van Barel's research is funded by (i) the Research Council KU Leuven, C1-project c16/15/059-nD (Numerical Linear Algebra and Polynomial Computations), and by (ii) the Fund for Scientific Research--Flanders (Belgium), EOS Project no 30468160 (SeLMA).
R.~Jungers is supported by the Walloon Region and the Innoviris Foundation.}


\author[address1]{Guillaume~O.~Berger\corref{mycorrespondingauthor}\fnref{myfootnote}}
\cortext[mycorrespondingauthor]{Corresponding author}
\ead{guillaume.berger@uclouvain.be}

\author[address1]{P.-A.~Absil}
\ead{pa.absil@uclouvain.be}

\author[address2,address3]{Lieven~De~Lathauwer}
\ead{lieven.delathauwer@kuleuven.be}

\author[address1]{Rapha\"el~M.~Jungers\fnref{myfootnote}}
\ead{raphael.jungers@uclouvain.be}

\author[address4]{Marc~Van~Barel}
\ead{marc.vanbarel@cs.kuleuven.be}

\address[address1]{ICTEAM Institute, UCLouvain, 1348 Louvain-la-Neuve, Belgium}
\address[address2]{Group of Science, Engineering and Technology, KU Leuven Kulak, 8500 Kortrijk, Belgium}
\address[address3]{Department of Electrical Engineering (ESAT), KU Leuven, 3001 Leuven, Belgium}
\address[address4]{Department of Computer Science, KU Leuven, 3001 Leuven, Belgium}

\fntext[myfootnote]{G.~Berger is an FNRS/FRIA Fellow.
R.~Jungers is an FNRS Research Associate}

\begin{abstract}
Invariance transformations of polyadic decompositions of matrix multiplication tensors define an equivalence relation on the set of such decompositions.
In this paper, we present an algorithm to efficiently decide whether two polyadic decompositions of a given matrix multiplication tensor are equivalent.
With this algorithm, we analyze the equivalence classes of decompositions of several matrix multiplication tensors.
This analysis is relevant for the study of fast matrix multiplication as it relates to the question of how many essentially different fast matrix multiplication algorithms there exist.
This question has been first studied by de~Groote, who showed that for the multiplication of $2\times2$ matrices with $7$ active multiplications, all algorithms are essentially equivalent to Strassen's algorithm.
In contrast, the results of our analysis show that for the multiplication of larger matrices, (e.g., $2\times3$ by $3\times2$ or $3\times3$ by $3\times3$ matrices), two decompositions are very likely to be essentially different.
We further provide a necessary criterion for a polyadic decomposition to be equivalent to a polyadic decomposition with integer entries.
Decompositions with specific integer entries, e.g., powers of two, provide fast matrix multiplication algorithms with better efficiency and stability properties.
This condition can be tested algorithmically and we present the conclusions obtained for the decompositions of small/medium matrix multiplication tensors.
\end{abstract}

\begin{keyword}
Fast matrix multiplication\sep polyadic tensor decompositions\sep eigenvalue decomposition.
\MSC[2010] 15A69\sep 14Q20\sep 68W30.
\end{keyword}

\end{frontmatter}

\section{Introduction}\label{sec-intro}

The straightforward way to multiply two $N\times N$ matrices costs $\BigO(N^3)$ operations.
In particular, multiplying two $2\times 2$ matrices requires $8$ scalar multiplications.
However, as first remarked by V.~Strassen \cite{Str69} in 1969, the arithmetic operations can be grouped cleverly to reduce the work to $7$ multiplications only.
By doing this recursively, we can reduce the cost for the multiplication of $N\times N$ matrices to $\BigO(N^{2.81})$ operations.
Strassen's discovery opened the door to a considerable amount of research on the algorithmic complexity of matrix multiplication (see paragraphs below for a survey).
The reduction of the complexity may actually become so significant that a new architecture for large matrix multiplication is emerging.
Essential is first that we find inexpensive schemes for the multiplication of relatively small matrices.

The multiplication of $m\times p$ matrices by $p\times n$ matrices can be represented by a third-order tensor.
Finding inexpensive schemes for the multiplication of such matrices can be approached by decomposing the associated tensor as a sum of rank-$1$ terms (such a decomposition is called \emph{polyadic decomposition}).
The minimal number of rank-$1$ terms necessary to decompose a tensor is its rank.
In the case of matrix multiplication, the rank of the associated tensor is equal to the smallest number of active multiplications needed to compute the matrix product.
(By \emph{active multiplication}, we mean a multiplication of two scalars that both depend on the matrices to be multiplied.)
As a consequence, determining the rank of the associated tensor allows us to find an exponent $\alpha$ such that the complexity for the multiplication of $N\times N$ matrices is at most of $\BigO(N^\alpha)$ arithmetic operations \cite{BurCla13}.

Although the problem of matrix multiplication complexity is quite old, only partial results are known so far.
Even for the multiplication of small matrices, determining the rank of the associated tensor is still an open problem.
The largest case that is completely understood is the multiplication of $2\times 2$ matrices by $2\times 2$ matrices.
The rank of the associated tensor is $7$ \cite{BroDob78} (so that Strassen's algorithm is optimal), and it was proved by de Groote \cite{deG78b} that the decomposition induced by Strassen's algorithm is essentially unique (with respect to a class of transformations acting on polyadic decompositions of matrix multiplication; see the paragraph hereunder).
For the multiplication of $3\times3$ matrices, an algorithm using $23$ active multiplications was proposed by Laderman \cite{Lad76} in 1976, and by Makarov in 1986 \cite{Mak86}; see also \cite{Sed17b}.
This means that the rank of the associated tensor is at most $23$.
On the other hand, Bl{\"a}ser proved \cite{Bla03} in 2003 that the rank for the multiplication of $3\times3$ matrices should be at least $19$.
The gap $19$--$23$ has not been reduced since then.
Other algorithms for the multiplication of matrices with small and medium sizes are proposed in \cite{Mak87}, \cite{Sed17a}, \cite{Sed17c}, \cite{Ros19}; also the paper \cite{BalBen16} of Ballard et al.~gives a good overview of the practical algorithms that are available in 2016.
Further reductions of the exponent of matrix multiplication complexity have been achieved, e.g., by Pan \cite{Pan78,Pan80}, Bini et al.~\cite{BinCap79}, Sch\"onhage \cite{Sch81}, and Coppersmith and Winograd \cite{CopWin82} by means of more advanced techniques (including namely the study of the ``border rank'' of the associated tensor; see also \cite{Sto10,LanMic17}).
Currently, the best known upper bound for the complexity of matrix multiplication is $\BigO(N^{2.3729})$ by Le Gall \cite{LeG14}.

The aim of this paper is to study the connections between polyadic decompositions for matrix multiplication tensors.
In particular, we consider three types of transformations, called invariance transformations, acting on the set of polyadic decompositions of a given matrix multiplication tensor, and we study the equivalence relation induced by these transformations.
These transformations have been studied by de Groote \cite{deG78a,deG78b} who has shown that Strassen's algorithm is essentially unique in the sense that every other decomposition with $7$ rank-$1$ terms is equivalent to it.
In contrast, for the multiplication of $3\times3$ matrices, Johnson and McLoughlin \cite{JohMcL86} showed that Laderman's algorithm \cite{Lad76} is not essentially unique.
They provided two parametrized families of decompositions (with $23$ rank-$1$ terms) of the $3\times3$ matrix multiplication tensor that are mutually inequivalent and also inequivalent to Laderman's.
Later, Oh, Kim and Moon \cite{OhJin13} discovered other decompositions of the $3\times3$ matrix multiplication tensor inequivalent to the previous ones, and Sedoglavic \cite{Sed17b} showed that Laderman's algorithm can be constructed using Strassen's algorithm and related tensor's transformations.

The techniques used by de~Groote, Johnson and McLoughlin, and Oh et al.~to prove the equivalence/inequivalence of decompositions are either too specific \cite{deG78b,JohMcL86} or too conservative \cite{OhJin13} (some inequivalent decompositions are not recognized as such) to be applied to general decompositions of matrix multiplication tensors of arbitrary size.
In this paper, we present an algorithm for deciding whether two given decompositions are equivalent through invariance transformations.
Thanks to this algorithm, we were able to study the equivalence classes of large sample sets of matrix multiplication tensor decompositions (computed with numerical methods, see Section~\ref{sec-num}).
This allows us to get a better understanding of the equivalence relation of decompositions: for instance, the numerical experiments (Section~\ref{sec-num-equi}) suggest that for tensors larger than the $2\times2$ by $2\times2$ case, two ``generic'' decompositions are inequivalent.

In addition, we describe a necessary criterion for a matrix multiplication tensor decomposition to be \emph{discretizable}, that is, to be equivalent to a decompositions whose rank-$1$ terms can be factorized into vectors or matrices whose entries only take a few distinct values (for instance, we may want that all entries of the factor vectors/matrices of the rank-$1$ terms belong to the set $\{-1,0,+1\}$).
Such decompositions are called discrete decompositions \cite{Smi13}.
Our interest in discretizable decompositions originates from the observation that for small/medium matrix multiplication, the decompositions proposed in the literature are generally discrete: Strassen's and Laderman's algorithms are discrete with factor matrices coefficients belonging to $\{-1,0,+1\}$, other (inequivalent) decompositions with coefficients in $\{-1,0,+1\}$ are proposed, e.g., in \cite{BalBen16,OhJin13,TicPha17} for the multiplication of $2\times3$ by $3\times2$, $2\times3$ by $3\times3$, and $3\times3$ by $3\times3$ matrices.
In particular, all the decompositions listed in \cite{BalBen16} are discrete.

Discrete decompositions provide matrix multiplication algorithms with better efficiency and stability properties.
However, the classical iterative processes for computing tensor decompositions do not lead in general to solutions of this kind.
A reasonable approach to compute discrete decompositions is then to (i)~compute a general decomposition, and (ii)~use invariance transformations to obtain an equivalent discrete decomposition.
Closely related methods are used, e.g., in \cite{JohMcL86,TicPha17}.
The necessary criterion for discretizability allows us to identify some decompositions that cannot be transformed via invariance transformations into a discrete decomposition with a given ``target set'' for the coefficients.
By applying the necessary criterion to the sample sets of decompositions, we observed that, contrary to what the decompositions available in the literature suggest, most of the decompositions for tensors larger than the $2\times2$ by $2\times2$ case are not discretizable with respect to the commonly-used target sets (e.g., $\{0,\pm1\}$ or $\{0,\pm1/2,\pm1\}$).

The paper is organized as follows.
In Section~\ref{sec-preli}, we introduce the notation and recall the definitions of matrix multiplication tensors and polyadic decompositions.
Invariance transformations and the induced equivalence relations are introduced in Section~\ref{sec-invar}.
In Section~\ref{sec-equiv}, we describe the algorithm for deciding whether two decompositions are equivalent and if so, computing the invariance transformations involved in their equivalence.
The necessary criterion for discretizability is discussed in Section~\ref{sec-discrete}.
Numerical experiments are presented in Section~\ref{sec-num}.

\section{Preliminaries}\label{sec-preli}

\subsection{Matrix multiplication tensors and polyadic decompositions}

Let $U$, $V$ and $W$ be vector spaces over a field $\Fi$.
We denote by $\Bil(U,V;W)$ the set of $\Fi$-bilinear maps from $U\times V$ to $W$.
For positive integers $m,p,n$, the multiplication of $m\times p$ matrices by $p\times n$ matrices can be represented by the bilinear map $\PhiMat\in\Bil(\Fi^{m\times p},\Fi^{p\times n};\Fi^{m\times n})$ defined by
\[
\PhiMat(\AA,\BB)=\AA\BB.
\]
From the identification between multilinear maps and tensors, $\PhiMat$ is sometimes referred to as the \emph{$(m,p,n)$ matrix multiplication tensor}.

The concept of rank of a bilinear map $\Phi\in\Bil(U,V;W)$ is central in the analysis of the asymptotic complexity of matrix multiplication.
We say that $\Phi\neq0$ has rank $1$ if $\Phi(u,v) = f(u)g(v)w$ for some $f\in U^*$, $g\in V^*$ and $w\in W$, where $U^*$ and $V^*$ are the dual spaces of $U$ and $V$ respectively.
For a general $\Phi\in\Bil(U,V;W)$, an \emph{$F$-term polyadic decomposition} (in short $F$-PD) of $\Phi$ is a decomposition of $\Phi$ as the sum of $F$ rank-$1$ terms \cite{Hit27,KolBad09}:
\begin{equation}\label{eq-FPD}
\Phi(u,v) = \sum_{r=1}^F f_r(u) g_r(v) w_r
\end{equation}
for some $f_r\in U^*$, $g_r\in V^*$ and $w_r\in W$.
The \emph{rank} of $\Phi$ is the smallest $F$ such that $\Phi$ admits an $F$-term polyadic decomposition \eqref{eq-FPD}.

For a matrix multiplication tensor $\PhiMat$, a polyadic decomposition like \eqref{eq-FPD} requires $f_r\in(\Fi^{m\times p})^*$, $g_r\in(\Fi^{p\times n})^*$ and $w_r=\WW_r\in\Fi^{m\times n}$.
We may identify $f_r$ with the unique matrix $\UU_r\in\Fi^{p\times m}$ such that
\begin{equation}\label{eq-trace}
f_r(\AA) = \trace(\UU_r\AA) = \sum_{i=1}^p\sum_{j=1}^m \UU_r^{(i,j)} \AA^{(j,i)}
\end{equation}
for every $\AA\in\Re^{m\times p}$, where $\MM^{(i,j)}$ denotes the $(i,j)$th entry of a matrix $\MM$.
In the same way, $g_r$ can be identified with a unique matrix $\VV_r\in\Fi^{n\times p}$.
If $\UU_1,\ldots,\UU_F$, $\VV_1,\ldots,\VV_F$ and $\WW_1,\ldots,\WW_F$ give rise to a decomposition \eqref{eq-FPD} of $\PhiMat$, we will say with slight abuse of notation that the triple $(\UU_\myF,\VV_\myF,\WW_\myF)$\footnote{\label{foo-F-uple}The rational behind this notation is that if $\XX_1,\XX_2,\ldots,\XX_F$ are $F$ mathematical objects, then $\XX_\myF$ denotes the ordered set, or $F$-uple, $(\XX_1,\XX_2,\ldots,\XX_F)$.} is an $F$-term polyadic decomposition ($F$-PD) of $\PhiMat$.

The link between matrix multiplication complexity and the rank of matrix multiplication tensors is nicely explained in \cite[Chapter~15]{BurCla13}.
Especially, it is shown how to build from an $F$-term polyadic decomposition ($F$-PD) of $\PhiMat$ a recursive algorithm for the multiplication of $N\times N$ matrices over $\Fi$ with complexity in $\BigO(N^{\omega+\varepsilon})$ arithmetic operations $\{+,-,\times\}$, with $\omega=3\log_{mpn}(F)$ and for any $\varepsilon>0$.
For instance, Strassen's algorithm can be obtained from a decomposition of $\Phi_{2,2,2}$ with $7$ terms.
This directly gives the well-known upper bound $\omega=3\log_8(7)\approx 2.81$ for the exponent of matrix multiplication complexity \cite{Str69}.

\subsection{Discrete decompositions}

In this paper, we focus on algorithms for matrix multiplication over the field of real numbers, i.e., on the case $\Fi=\Re$ and $\UU_r$, $\VV_r$ and $\WW_r$ are real matrices.

For the problem of matrix multiplication over $\Re$ (or $\mathbb{C}$), two decompositions might be not equally useful even if they have the same number of rank-$1$ terms.
For instance, decompositions with ``structured'' values in the rank-$1$ terms are more useful in practice.
This leads us to the following definition:

\begin{definition}\label{def-discr}
A decomposition $(\UU_\myF,\VV_\myF,\WW_\myF)$ of $\PhiMat$ is said to be \emph{discrete} if the entries of $\UU_r$, $\VV_r$ and $\WW_r$ belong to $q\ZZb$ for some $q\in\Re$.
\end{definition}

Discrete decompositions are favorable for two reasons.
The first reason concerns the exactness of the decomposition: if $(\UU_\myF,\VV_\myF,\WW_\myF)$ is computed with numerical methods, then this will give a decomposition of $\PhiMat$ only up to some \emph{finite} accuracy (and also up to machine precision, due to floating-point arithmetic computations).
Hence, the matrix multiplication algorithm obtained from this decomposition will compute the product of $\AA\in\Re^{m\times p}$ and $\BB\in\Re^{p\times n}$ with a small error, even in exact arithmetic.
This is not advisable because this error will accumulate when we will apply the algorithm in a recursive way to compute the product of general $N\times N$ matrices \cite[Chapter~15]{BurCla13}.
These limited-accuracy issues can be overcome if we know a priori that $\UU_r$, $\VV_r$ and $\WW_r$ have their entries in a known discrete set.

The second reason to favor discrete decompositions is that the obtained algorithm for matrix multiplication will have better stability and computational cost properties.
Indeed, if the entries of $\UU_r$, $\VV_r$ and $\WW_r$ belong to $q\ZZb$, then it is not hard to show (we do not go into the details) that, modulo some pre- and post multiplication of $\AA\in\Re^{m\times p}$ and $\BB\in\Re^{p\times n}$ by $q$, the product $\AA\BB$ can be computed using only additions and multiplications of the entries of $\AA$ and $\BB$ by integers.
(For example, in Strassen's algorithm, $f_r(\AA)$ [resp.\ $g_r(B)$] can be obtained using only additions and subtractions of the entries of $\AA$ [resp.\ $\BB$].)
Multiplication by integers is more rapid and stable than multiplication by arbitrary floating-point numbers (for instance, multiplication by a power of $2$ is equivalent to changing the exponent in the floating point representation).
For more detailed information on the forward normwise error induced by a fast matrix multiplication algorithm, we refer the interested reader to \cite{BalBen16}.

\section{Invariance transformations}\label{sec-invar}

The main goal of this paper is to study relations between decompositions of a given matrix multiplication tensor.
We describe three types of operations that transform an $F$-PD of a matrix multiplication tensor into another $F$-PD of the same tensor.
These transformations will be referred to as \emph{invariance transformations}.

\begin{proposition}[Invariance transformations]\label{pro-invariance}
Let $(\UU_\myF,\VV_\myF,\WW_\myF)$ be an $F$-PD of $\PhiMat$.
The following transformations produce matrices $\UU'_r$, $\VV'_r$ and $\WW'_r$ ($1\leq r\leq F$) such that $({\UU'}_\myF,{\VV'}_\myF,{\WW'}_\myF)$ is also an $F$-PD of $\PhiMat$:
\begin{itemize}[topsep=3pt,itemsep=2pt,leftmargin=*]
	\item Permutation transformations: let $\sigma\in\Perm_F$ (where $\Perm_F$ is the set of permutations of $\{1,\ldots,F\}$) and define
	\[
	\UU'_r = \UU_{\sigma(r)}^{} ,\quad \VV'_r = \VV_{\sigma(r)}^{} ,\quad \WW'_r = \WW_{\sigma(r)}^{}.
	\]
	\item Scaling transformations: choose coefficients $\lambda_r,\mu_r,\nu_r\in\Re$ such that $\lambda_r\mu_r\nu_r=1$ for each $1\leq r\leq F$ and define
	\[
	\UU'_r=\lambda_r^{}\UU_r^{} ,\quad \VV'_r=\mu_r^{}\VV_r^{} ,\quad \WW'_r=\nu_r^{}\WW_r^{} .
	\]
	\item Trace transformations: let $\PP\in\Gl(m)$, $\QQ\in\Gl(p)$ and $\RR\in\Gl(n)$ (where $\Gl(h)$ denotes the set of invertible $h\times h$ matrices), and define
	\[
	\UU'_r=\QQ^{-1}\UU_r^{}\PP ,\quad \VV'_r=\RR^{-1}\VV_r^{}\QQ ,\quad \WW'_r=\PP^{-1}\WW_r^{}\RR .
	\]
\end{itemize}\vskip0pt
\end{proposition}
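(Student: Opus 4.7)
The plan is to verify each of the three transformations separately by plugging the new matrices $\UU'_r,\VV'_r,\WW'_r$ into the defining identity
\[
\PhiMat(\AA,\BB)\;=\;\sum_{r=1}^F \trace(\UU_r\AA)\,\trace(\VV_r\BB)\,\WW_r\;=\;\AA\BB
\]
and checking that the resulting sum still equals $\AA\BB$ for every $\AA\in\Re^{m\times p}$ and $\BB\in\Re^{p\times n}$. The first two cases are essentially formal: for a permutation $\sigma\in\Perm_F$ the sum is reindexed by $r\mapsto\sigma(r)$, which leaves it unchanged since addition is commutative; for the scaling transformation, one pulls the scalars out of the trace and out of the rank-$1$ term to obtain a factor $\lambda_r\mu_r\nu_r$ in front of the $r$-th summand, which equals $1$ by hypothesis. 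I would dispatch both of these in one short paragraph.

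The substance of the proof is the trace transformation, and the main tool is the cyclic invariance of the trace. First I would compute, using $\trace(\XX\YY)=\trace(\YY\XX)$,
\[
\trace(\QQ^{-1}\UU_r\PP\AA)=\trace(\UU_r\,\PP\AA\QQ^{-1}),\qquad \trace(\RR^{-1}\VV_r\QQ\BB)=\trace(\VV_r\,\QQ\BB\RR^{-1}).
\]
This suggests the change of variables $\AAt=\PP\AA\QQ^{-1}\in\Re^{m\times p}$ and $\BBt=\QQ\BB\RR^{-1}\in\Re^{p\times n}$. Substituting and pulling the constant matrices $\PP^{-1}$ on the left and $\RR$ on the right out of the sum yields
\[
\sum_{r=1}^F \trace(\UU'_r\AA)\,\trace(\VV'_r\BB)\,\WW'_r \;=\; \PP^{-1}\!\left(\sum_{r=1}^F \trace(\UU_r\AAt)\,\trace(\VV_r\BBt)\,\WW_r\right)\!\RR.
\]

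Finally, I would apply the hypothesis that $(\UU_\myF,\VV_\myF,\WW_\myF)$ is an $F$-PD of $\PhiMat$ to evaluate the inner sum at the pair $(\AAt,\BBt)$, obtaining $\AAt\BBt$. A one-line computation telescopes this to the desired identity:
\[
\PP^{-1}\AAt\BBt\,\RR \;=\; \PP^{-1}(\PP\AA\QQ^{-1})(\QQ\BB\RR^{-1})\RR \;=\; \AA\BB,
\]
which completes the trace case. I do not anticipate any genuine obstacle: the only place where something non-trivial happens is the interplay between the trace identification \eqref{eq-trace} and the cyclic property of the trace, and the bookkeeping of matrix sizes (checking that $\PP\in\Gl(m)$, $\QQ\in\Gl(p)$, $\RR\in\Gl(n)$ multiply the factors on the correct side). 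Care must be taken that the expression $\QQ^{-1}\UU_r\PP$ lives in $\Re^{p\times m}$ and $\RR^{-1}\VV_r\QQ\in\Re^{n\times p}$, so that the new triple is again formally a $F$-PD in the sense of \eqref{eq-trace}; this is a short dimension check to include at the start of the trace-case paragraph.
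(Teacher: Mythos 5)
Your proof is correct and follows essentially the same route as the paper's: permutations and scalings are dispatched formally, and the trace case reduces, via cyclic invariance of $\trace$ and the substitution $\AA\mapsto\PP\AA\QQ^{-1}$, $\BB\mapsto\QQ\BB\RR^{-1}$, to the hypothesis that the original triple decomposes $\PhiMat$, followed by the telescoping $\PP^{-1}(\PP\AA\QQ^{-1})(\QQ\BB\RR^{-1})\RR=\AA\BB$. The dimension check you flag at the end is a nice touch but does not constitute a difference in approach.
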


The first two classes of invariance transformations (permutations and scaling) provide invariance transformations for the decompositions of any tensors.
However, the third class (trace transformations) is somehow specific to matrix multiplication tensors, as it originates from the invariance of the trace operator (see proof below); hence the name ``trace transformations''.

\begin{proof}[Proof of Proposition~\ref{pro-invariance}]
(See, e.g., \cite{deG78a}).
The invariance of the permutation and scaling transformations is straightforward.
For the trace transformations, let $f'_r\in(\Re^{m\times p})^*$, $g'_r\in(\Re^{p\times n})^*$ and $\Phi'\in\Bil(\Re^{m\times p},\Re^{p\times n};\Re^{m\times n})$ be given by \eqref{eq-trace} and \eqref{eq-FPD} with $({\UU'}_\myF,{\VV'}_\myF,{\WW'}_\myF)$.
Then
\[
f'_r(\AA)=\trace(\UU'_r\AA)=\trace(\UU_r^{}[\PP\AA\QQ^{-1}])=f_r^{}(\PP\AA\QQ^{-1})
\]
where we have used the invariance property of the trace operator with respect to cyclic permutations.
Similarly, $g'_r(\BB)=g_r^{}(\QQ\BB\RR^{-1})$.
It follows that
\begin{align*}
\Phi'(\AA,\BB) &= \PP^{-1} \bigg[ \sum_{r=1}^F f_r(\PP\AA\QQ^{-1}) g_r(\QQ\BB\RR^{-1}) \WW_r \bigg] \RR \\
&= \PP^{-1}\PhiMat(\PP\AA\QQ^{-1},\QQ\BB\RR^{-1}) \RR = \PP^{-1} (\PP\AA\QQ^{-1}) (\QQ\BB\RR^{-1}) \RR = \AA\BB.
\end{align*}
Thus $\Phi'=\PhiMat$ showing that $({\UU'}_\myF,{\VV'}_\myF,{\WW'}_\myF)$ is an $F$-PD of $\PhiMat$.
\end{proof}

Invariance transformations define an equivalence relation on the set of $F$-PDs of a given matrix multiplication tensor.
For fixed $m,p,n$ and $F$, two polyadic decompositions $(\UU_\myF,\VV_\myF,\WW_\myF)$ and $({\UU'}_\myF,{\VV'}_\myF,{\WW'}_\myF)$ of $\PhiMat$ are \emph{equivalent} if there exist permutation, scaling and/or trace transformations that allow one to transform $(\UU_\myF,\VV_\myF,\WW_\myF)$ into $({\UU'}_\myF,{\VV'}_\myF,{\WW'}_\myF)$.
We will also say that $(\UU_\myF,\VV_\myF,\WW_\myF)$ and $({\UU'}_\myF,{\VV'}_\myF,{\WW'}_\myF)$ are \emph{permutation-equivalent} if there exists a permutation transformation allowing us to transform $(\UU_\myF,\VV_\myF,\WW_\myF)$ into $({\UU'}_\myF,{\VV'}_\myF,{\WW'}_\myF)$.
Similarly, we define the notion of \emph{(scaling+trace)-equivalence}.

We have just seen that invariance transformations can be used to produce many different decompositions (i.e., many fast matrix multiplication algorithms) from a given one.
This raises the following questions that we will address in this paper:

\vspace{1ex plus .25ex minus .1ex}

{\itshape How many inequivalent polyadic decompositions does $\PhiMat$ admit?
In other words, how many essentially different fast matrix multiplication algorithms are there for the multiplication of $m\times p$ matrices by $p\times n$ matrices?}
We will tackle this question in Sections~\ref{sec-equiv} and~\ref{sec-num-equi}.

\vspace{1ex plus .25ex minus .1ex}

{\itshape Starting from a given algorithm for the multiplication of $m\times p$ matrices by $p\times n$ matrices, can we obtain with invariance transformations another algorithm with better performance (e.g., in terms of stability and efficiency)?}
We will tackle this question in Sections~\ref{sec-discrete} and~\ref{sec-num-disc}.

\begin{remark}
At first sight, it might look like the scaling transformations act as a particular case of the trace transformations with
\[
\PP = \left(\frac{\lambda}{\nu}\right)^{1/3} \II_m ,\quad \QQ = \left(\frac{\mu}{\lambda}\right)^{1/3} \II_p ,\quad \RR = \left(\frac{\nu}{\mu}\right)^{1/3} \II_n
\]
for example.
In fact, this is not the case since the above $\PP,\QQ,\RR$ will rescale all the matrices $\UU_r,\VV_r,\WW_r$ with the same coefficients $\lambda,\mu,\nu$ (provided $\lambda\mu\nu=1$) while the scaling transformations admit different coefficients $\lambda_1,\ldots,\lambda_F$, $\mu_1,\ldots,\mu_F$ and $\nu_1,\ldots,\nu_F$.~\qedfill
\end{remark}

\section{An algorithm for checking equivalence}\label{sec-equiv}

In this section, we present an algorithm for deciding whether two $F$-PDs of a matrix multiplication tensor are equivalent.
Under mild assumptions on the input $F$-PDs, the algorithm will either return the permutation, scaling and trace transformations that allow one to connect both $F$-PDs or conclude that the two $F$-PDs are not equivalent to each other.
The working assumptions were satisfied for $100\%$ of the samples on which we performed numerical experiments (see Section~\ref{sec-num-equi}), motivating the qualifier ``mild'' assumptions.

We start this section by introducing the concept of clustering number of a matrix.
This number can be computed efficiently and is used in the assumption to guarantee proper working of the algorithm.

\subsection{The clustering number of a matrix}

Let $\AA$ be an $m\times n$ matrix.
Let $\{U_1,\ldots,U_S\}$ be a family of linearly independent subspaces of $\Re^m$ (i.e., $u_1+\cdots+u_S=0$ with $u_i\in U_i$ for each $1\leq i\leq S$ implies $u_i=0$ for every $1\leq i\leq S$).
If each column of $\AA$ belongs to some $U_i$ (in fact, except the case where the column contains only zeros, it may belong to at most one $U_i$ since they are linearly independent), then we say that $\{U_1,\ldots,U_S\}$ is a \emph{cover} of $\AA$.
The largest integer $S^*$ such that there exists a cover of $\AA$ with $S^*$ linearly independent subspaces is called the \emph{clustering number} of $\AA$ and is denoted by $\cl(A) = S^*$ (the choice of the symbol $\oplus$ comes from the fact that if $\{U_1,\ldots,U_S\}$ is a cover of $A$ with $S=S^*$ then $U_1 \oplus \cdots \oplus U_S = \Re^m$).

\begin{example}
Let $U_1 = \colspan(\AA)$ and suppose that the rank of $\AA$ is $r$.
Then it is easy to build a cover $\{U_1,U'_1,\ldots,U'_{m-r}\}$ where the $U'_i$'s are one-dimensional subspaces.
Hence, we conclude that $\cl(\AA) \geq m+1-\rank(\AA)$.~\qedfill
\end{example}

Suppose that $\AA$ has full row-rank, i.e., $\rank(\AA) = m$.
We give a characterization of the clustering number of $\AA$ in terms of the connected components of a graph.
Denote by $\aa_1,\ldots,\aa_n$ the columns of $\AA$.
Without loss of generality, we may assume that the first $m$ columns of $\AA$ span $\Re^m$.
Let $\AA' = [\aa_1,\ldots,\aa_m]$ and $\AA'' = [\aa_{m+1},\ldots,\aa_n]$.

Define the undirected graph $\GGc$ as follows.
The integers $\{1,\ldots,m\}$ are the nodes of $\GGc$ and for each column $\aa_j$ of $\AA''$, let $\qq_j = (q_{j,1},\ldots,q_{j,m})^\top$ be its coordinates in the basis defined by $\AA'$, i.e., $\qq_j = (\AA')^{-1}\aa_j$.
For each $i_1,i_2 \in \{1,\ldots,m\}$, draw an edge between the nodes $i_1$ and $i_2$ if and only if there is a column $\aa_j$ of $\AA''$ such that its coordinate vector has a nonzero component in both $\aa_{i_1}$ and $\aa_{i_2}$, i.e., if $q_{j,i_1}\neq0$ and $q_{j,i_2}\neq0$ (clearly, there might be multiple edges and also loops).
Moreover, each edge receives a \emph{label}: this label is simply $j$, the index of the column of $\AA''$ that led to this edge.
An example is represented in Figure~\ref{fig-matr-grap}.
This construction allows us to state the following lemma:

\begin{figure}
\newcommand{\mycolsepmat}{4pt}%
\centering
\begin{tikzpicture}[%
auto,
node distance=3cm,
vertex/.style={circle,draw,thick}%
]
\matrix[%
inner sep=2pt,
nodes={inner sep=6pt},
matrix of math nodes,
nodes in empty cells,
left delimiter={[},
right delimiter={]}%
] (m) at (-7,-1)
{
1 & 0 & 0 & |[inner sep=\mycolsepmat]| & 1 & |[inner sep=\mycolsepmat]| & 0 & |[inner sep=\mycolsepmat]| & 1 \\
0 & 1 & 0 & |[inner sep=\mycolsepmat]| & 1 & |[inner sep=\mycolsepmat]| & 0 & |[inner sep=\mycolsepmat]| & 2 \\
0 & 0 & 1 & |[inner sep=\mycolsepmat]| & 0 & |[inner sep=\mycolsepmat]| & 1 & |[inner sep=\mycolsepmat]| & 0 \\
};
\draw (m-1-1.north west) rectangle (m-3-3.south east);
\draw (m-1-5.north west) rectangle (m-3-5.south east);
\draw (m-1-7.north west) rectangle (m-3-7.south east);
\draw (m-1-9.north west) rectangle (m-3-9.south east);
\node [vertex] (1) at (0,0) {1};
\node [vertex] (2) at (-1.62,-2.8) {2};
\node [vertex] (3) at (1.62,-2.8) {3};
\draw [thick] (1) to [out=-140,in=80] node [midway,above left] {4} (2);
\draw [thick] (1) to [out=-100,in=40] node [midway,below right] {6} (2);
\draw [thick] (1) to [out=-20,in=20,looseness=10] node [midway,right] {4} (1);
\draw [thick] (1) to [out=100,in=140,looseness=10] node [midway,above] {6} (1);
\draw [thick] (2) to [out=200,in=160,looseness=10] node [midway,left] {4} (2);
\draw [thick] (2) to [out=-80,in=-40,looseness=10] node [midway,below] {6} (2);
\draw [thick] (3) to [out=100,in=140,looseness=10] node [midway,above] {5} (3);
\draw [thick,-latex] (-4,-1) -- (-3,-1);
\node [below=5mm of m-3-1.south,inner sep=6pt] (j1) {$1$};
\node [below=5mm of m-3-2.south,inner sep=6pt] (j2) {$2$};
\node [below=5mm of m-3-3.south,inner sep=6pt] (j3) {$3$};
\node [below=5mm of m-3-5.south,inner sep=6pt] (j5) {$4$};
\node [below=5mm of m-3-7.south,inner sep=6pt] (j7) {$5$};
\node [below=5mm of m-3-9.south,inner sep=6pt] (j9) {$6$};
\draw [dashed] (j1.south west) rectangle (j3.north east) node [pos=0.5,anchor=center] (rect1) {};
\draw [dashed] (j5.south west) rectangle (j9.north east) node [pos=0.5,anchor=center] (rect2) {};
\node [below=7mm of rect1,anchor=base] {\sffamily indices $i$};
\node [below=7mm of rect2,anchor=base] {\sffamily indices $j$};
\end{tikzpicture}
\caption{Example of matrix $\AA$ and the associated graph $\GGc$.}%
\label{fig-matr-grap}
\end{figure}

\begin{proposition}\label{pro-clus-graph}
Let $\AA$ and $\GGc$ be defined as above.
Then the clustering number of $\AA$ is equal to the number of connected components of $\GGc$.
\end{proposition}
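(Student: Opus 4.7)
The plan is to prove both inequalities $\cl(\AA) \geq T$ and $\cl(\AA) \leq T$, where $T$ denotes the number of connected components of $\GGc$, with components $C_1,\ldots,C_T$.

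For the lower bound, I would construct an explicit cover with $T$ subspaces. Set $U_k := \spann\{\aa_i : i \in C_k\}$. Since $\aa_1,\ldots,\aa_m$ form a basis of $\Re^m$, the subspaces $U_1,\ldots,U_T$ are linearly independent. To check coverage, the basis columns $\aa_i$ with $i \leq m$ trivially lie in $U_k$ for the component $C_k$ containing $i$. For a column $\aa_j$ with $j > m$, observe that the nonzero coordinates of $\qq_j$ form a clique in $\GGc$ (any pair of them is joined by an edge labeled $j$), and hence lie entirely within a single component $C_k$; consequently $\aa_j = \sum_i q_{j,i} \aa_i \in U_k$.

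For the upper bound, let $\{V_1,\ldots,V_S\}$ be an arbitrary cover of $\AA$ by (nonzero) linearly independent subspaces. Since the columns of $\AA$ span $\Re^m$ and each lies in some $V_s$, one gets $V_1 \oplus \cdots \oplus V_S = \Re^m$ and $\sum_s \dim V_s = m$. Each basis column $\aa_i$ lies in a unique $V_s$, which defines a map $\sigma \colon \{1,\ldots,m\} \to \{1,\ldots,S\}$. The two claims to establish are: (a) $\sigma$ is surjective, and (b) $\sigma$ is constant on each connected component of $\GGc$; together these force $S \leq T$.

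Claim (a) follows by pigeonhole: each $V_s$ contains the $|\sigma^{-1}(s)|$ linearly independent vectors $\{\aa_i : \sigma(i)=s\}$, so $|\sigma^{-1}(s)| \leq \dim V_s$, and summing gives $m = \sum_s |\sigma^{-1}(s)| \leq \sum_s \dim V_s = m$, forcing every $\sigma^{-1}(s)$ to be nonempty. For claim (b), let $(i_1,i_2)$ be an edge of $\GGc$ labeled $j$, so $q_{j,i_1},q_{j,i_2} \neq 0$, and fix $k$ with $\aa_j \in V_k$. The decomposition $\aa_j = \sum_i q_{j,i}\aa_i = \sum_{s=1}^S \bigl(\sum_{i : \sigma(i)=s} q_{j,i}\aa_i\bigr)$ writes $\aa_j$ as a sum of vectors, one from each $V_s$. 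Uniqueness of decomposition in the direct sum $\bigoplus_s V_s$ forces the inner sums with $s \neq k$ to vanish; then linear independence of $\aa_1,\ldots,\aa_m$ forces $q_{j,i}=0$ whenever $\sigma(i) \neq k$, yielding $\sigma(i_1) = k = \sigma(i_2)$. The main obstacle is this uniqueness step in (b), which must simultaneously exploit the direct-sum structure of the $V_s$'s and the linear independence of the $\aa_i$'s; the rest is routine bookkeeping.
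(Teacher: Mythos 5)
Your proof is correct and follows essentially the same two-direction strategy as the paper: build a cover from the connected components for the lower bound, and show that any cover refines the components for the upper bound. The one place you go further is the uniqueness-of-decomposition argument in step (b) --- the paper states the conclusion ``$\aa_j$ belongs to $U_{s_1}$ and $U_{s_2}$'' somewhat tersely, whereas you make the direct-sum uniqueness and the basis-coefficient matching explicit, which is a welcome tightening of that step.
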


\begin{proof}
Let $\{\GGc_1,\ldots,\GGc_T\}$ be the connected components of $\GGc$.
First, we show that $\cl(\AA)\geq T$.
For each $1\leq t\leq T$, let $I_t$ be the set of all column indices involved in the nodes of $\GGc_t$.
(For example, considering the graph in Figure~\ref{fig-matr-grap}, we would have $I_1 = \{1,2\}$ and $I_2 = \{3\}$.)
For each $t$, define the subspace $U_t$ as the subspace spanned by the columns $\aa_i$ with $i\in I_t$.
By hypothesis on $\AA'$ having full rank, the subspaces $U_t$ satisfy $\Re^m=U_1\oplus\cdots\oplus U_T$.
We have to show that each column $\aa_j$ of $\AA''$ belongs to some $U_t$.

Therefore, we show that the label $j$ appears in the edges of at most one component $\GGc_t$.
Indeed, if $j$ appears in $\GGc_{t_1}$ and $\GGc_{t_2}$, then $\aa_j$ has a nonzero component in at least one node $i_1$ of $\GGc_{t_1}$ and one node $i_2$ of $\GGc_{t_2}$.
Hence, there must be an edge between $i_1$ and $i_2$ and thus $\GGc_{t_1}$ and $\GGc_{t_2}$ are connected, a contradiction.
Thus $\cl(\AA) \geq T$.

To show that $\cl(\AA)\leq T$, let $S = \cl(\AA)$ and let $\{U_1,\ldots,U_S\}$ be a cover of $\Re^m$.
For each $1\leq s\leq S$, let $I_s$ be the set of the indices of the columns of $\AA'$ belonging to $U_s$.
Since $\AA'$ has full rank, it is clear that $U_s = \spann(\{\aa_i\}_{i\in I_s})$.
We show that $\{I_1,\ldots,I_S\}$ defines connected components of $\GGc$.
Indeed, if there is an edge, say with label $j$, between nodes $i_1 \in I_{s_1}$ and $i_2 \in I_{s_2}$, then $\aa_j$ has nonzero components in $\aa_{i_1}$ and in $\aa_{i_2}$ and thus it belongs to the subspaces $U_{s_1}$ and $U_{s_2}$.
However, by definition, the subspaces $U_s$ are linearly independent.
Hence, we must have $s_1=s_2$.
We conclude that there are at least $S$ connected components in $\GGc$ and thus $T \geq \cl(\AA)$.
\end{proof}

Proposition~\ref{pro-clus-graph} above gives an efficient way to compute the clustering number of a matrix.
We describe below another way to efficiently compute the clustering number of a matrix using linear algebra only (and that will be useful in the proof of Lemma~\ref{lem-dim-equi} below).
To do this, let $\AA$ be an $m\times n$ matrix and consider the following linear system:
\begin{equation}\label{eq-lin-cluster}
\MM \AA = \AA \, \diag(\xi_1,\ldots,\xi_n)
\end{equation}
with variables $\MM\in\Re^{m\times m}$ and $\xi_{[n]} = (\xi_1,\ldots,\xi_n)\in\Re^n$.
Clearly the system \eqref{eq-lin-cluster} is homogeneous.
Let us denote by $\SSc$ the vector space of solutions $(\MM,\xi_{[n]})$ to \eqref{eq-lin-cluster}.

\begin{lemma}\label{lem-lin-cluster-full}
Let $\AA$ have full row-rank and no zero columns and let $\SSc$ be defined as above.
Then $\cl(\AA)=\dime(\SSc)$.
\end{lemma}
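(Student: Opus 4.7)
My plan is to parametrize $\SSc$ by the values $\xi_1,\ldots,\xi_m$ corresponding to the basis columns $\AA'$, and show that the constraints coming from the remaining columns $\AA''$ are exactly the constraints ``$\xi$ is constant on connected components of $\GGc$''. Then the result follows from Proposition~\ref{pro-clus-graph}.

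The starting observation is that $\MM\AA = \AA\,\diag(\xi_{[n]})$ reads column-wise as $\MM\aa_j = \xi_j\aa_j$ for each $j\in\{1,\ldots,n\}$. Assuming (without loss of generality, as in the construction preceding Proposition~\ref{pro-clus-graph}) that $\AA'=[\aa_1,\ldots,\aa_m]$ is a basis of $\Re^m$, the first $m$ equations uniquely determine
\[
\MM \;=\; \AA'\,\diag(\xi_1,\ldots,\xi_m)\,(\AA')^{-1}.
\]
For $j>m$ write $\aa_j=\AA'\qq_j$; the no-zero-columns hypothesis gives $\qq_j\neq0$. Substituting into $\MM\aa_j=\xi_j\aa_j$ and cancelling $\AA'$ by linear independence of its columns yields the scalar constraints
\[
q_{j,i}\,(\xi_i-\xi_j)\;=\;0 \qquad\text{for all } i\in\{1,\ldots,m\}.
\]

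Next I would interpret these constraints via the graph $\GGc$. An edge labelled $j$ between nodes $i_1,i_2$ exists precisely when $q_{j,i_1}\neq0$ and $q_{j,i_2}\neq0$, and the two constraints above then force $\xi_{i_1}=\xi_j=\xi_{i_2}$. Propagating along edges, any valid solution satisfies that $\xi_{[m]}$ is constant on each connected component of $\GGc$; conversely, given any such $\xi_{[m]}$ define $\MM$ by the above formula and, for each $j>m$, set $\xi_j$ equal to the common value of $\xi_i$ for any $i$ with $q_{j,i}\neq0$ (such $i$ exists since $\qq_j\neq0$, and the value is unambiguous because all these indices lie in a single connected component via the edge labelled $j$). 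This assignment satisfies every constraint, so it produces an element of $\SSc$.

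Thus the linear map $\xi_{[m]}\mapsto(\MM,\xi_{[n]})$ is a bijection between the space of functions on $\{1,\ldots,m\}$ that are constant on connected components of $\GGc$ and $\SSc$. The former has dimension equal to the number $T$ of connected components of $\GGc$, so $\dime(\SSc)=T$, and Proposition~\ref{pro-clus-graph} gives $T=\cl(\AA)$. The only mildly delicate point is the well-definedness of $\xi_j$ in the backward direction, which is exactly what the ``edge labelled $j$'' structure of $\GGc$ is designed to guarantee; everything else is routine linear algebra.
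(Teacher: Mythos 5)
Your proof is correct, and it follows the paper's argument very closely in the forward direction: you both determine $\MM$ from $\xi_1,\dots,\xi_m$ via $\MM=\AA'\,\diag(\xi_1,\ldots,\xi_m)(\AA')^{-1}$ and then read the constraints from the remaining columns as forcing $\xi_{[m]}$ to be constant on connected components of $\GGc$, which via Proposition~\ref{pro-clus-graph} bounds $\dime(\SSc)$ above by $\cl(\AA)$.

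The two proofs diverge in the converse direction. The paper constructs $\cl(\AA)$ independent solutions directly from a maximal cover $\{U_1,\ldots,U_S\}$: it takes projections $\MM_s$ onto each $U_s$, forms $\MM=\sum_s\eta_s\MM_s$, and assigns $\xi_i=\eta_{s_i}$ according to which block each column lives in. You instead stay entirely inside the graph picture: given any $\xi_{[m]}$ constant on components, you build $\MM$ by the closed-form formula and extend $\xi_{[m]}$ to $\xi_{[n]}$ using the edge-label structure to guarantee well-definedness, yielding an explicit linear bijection between component-constant vectors and $\SSc$ and hence $\dime(\SSc)=T$ with a single appeal to Proposition~\ref{pro-clus-graph} at the end. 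Both constructions are correct; yours is more unified (graph-centric throughout and gives the equality in one step rather than as two inequalities), while the paper's has the mild advantage that its converse is self-contained in terms of covers, without routing through the graph a second time.
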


\begin{proof}
First note that if $\AA\in\Re^{m\times n}$ has full row-rank then $m\leq n$.
Let $\AA'$ and $\AA''$ be defined as previously.
Without loss of generality, we may again assume that $\AA'$ has full rank.
Let $\xi_{[n]}=(\xi_1,\ldots,\xi_n)$ be fixed.
Then we have
\[
\MM = \AA' \, \diag(\xi_1,\ldots,\xi_m)\,\AA'^{-1}
\]
whence $\MM$ is completely determined by $\xi_{[n]}$.
Reversely, if $\MM$ is known, then every $\xi_1,\ldots,\xi_n$ are also determined since $\AA$ has no zero columns.
Hence, we only have to compute the number of degrees of freedom in $\xi_{[n]}$ to compute the dimension of $\SSc$.

Let $\GGc$ be the graph associated to $\AA$.
Suppose that $i_1$ and $i_2$ are two nodes that are adjacent to each other with an edge labeled by $j$.
Then, we have
\[
\MM \aa_j = \big[\AA' \, \diag(\xi_1,\ldots,\xi_m)\,\AA'^{-1}\big] \aa_j = \sum_{i=1}^m \aa_i\xi_iq_{j,i} = \aa_j\xi_j = \xi_j \sum_{i=1}^m \aa_iq_{j,i} .
\]
Since $\aa_1,\ldots,\aa_m$ are linearly independent and $q_{j,i_1}\neq0$ and $q_{j,i_2}\neq0$, the only solution provides that $\xi_{i_1}=\xi_{i_2}=\xi_j$.
We conclude that if two nodes $i_1,i_2$ belong to the same connected component $\GGc_t$, then $\xi_{i_1}=\xi_{i_2}$.
Hence, using Lemma~\ref{pro-clus-graph}, the dimension of $\SSc$ is lower than or equal to $\cl(\AA)$.

On the other hand, let $S = \cl(\AA)$ and let $\{U_1,\ldots,U_S\}$ be a cover of $\AA$.
Then for each $1\leq s\leq S$, let $\MM_s$ be the projection on $U_s$, i.e.,
\[
\MM_s (\xx + \yy) = \xx \;\;\;\;\;\;\;\;\; \text{for all $\xx\in U_s$ and $\yy\in\bigoplus_{s' \neq s} U_{s'}$}.
\]
Let $(\eta_1,\ldots,\eta_S)$ be a fixed vector and define $\MM = \sum_{s=1}^S \eta_s \MM_s$.
For each $1\leq i\leq n$, let $\xi_i = \eta_{s_i}$ where $s_i$ is the unique index $1\leq s_i\leq S$ such that $\aa_i \in U_{s_i}$.
Then $(\MM,\xi_{[n]})$ is a solution of \eqref{eq-lin-cluster}.
Hence, the dimension of $\SSc$ is at least $\cl(\AA)$.
\end{proof}

We are now able to state and prove the main theorem for this subsection:

\begin{theorem}\label{thm-lin-cluster-gen}
Let $\AA$ be an $m\times n$ matrix with rank $r$ and let $Z$ be the number of zero columns in $\AA$.
Then the clustering number of $\AA$ and the dimension of the solution space $\SSc$ of \eqref{eq-lin-cluster} satisfy
\[
\dime(\SSc) = \cl(\AA) + (m-1)(m-r) + Z .
\]\vskip0pt
\end{theorem}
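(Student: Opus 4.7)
The plan is to reduce the general case to Lemma \ref{lem-lin-cluster-full} via two successive simplifications. First, I would strip the $Z$ zero columns of $\AA$: each zero column $\aa_j$ imposes no constraint on $\MM$ in \eqref{eq-lin-cluster} and leaves $\xi_j$ entirely free, while a zero column belongs trivially to every subspace and hence does not influence whether a family $\{U_1,\ldots,U_S\}$ covers $\AA$. Consequently, deleting the zero columns decreases $\dime(\SSc)$ by exactly $Z$ and leaves both $\cl(\AA)$ and $\rank(\AA)$ unchanged, so it suffices to prove $\dime(\SSc) = \cl(\AA)+(m-1)(m-r)$ when $\AA$ has no zero columns.

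Second, I would reduce to the full row-rank case via a change of basis in $\Re^m$. Choose $\PP\in\Gl(m)$ so that $\PP\AA$ has its bottom $m-r$ rows zero and its top $r$ rows forming a matrix $\AA_1\in\Re^{r\times n}$ of full row rank. The substitution $\MM\mapsto\PP^{-1}\MM'\PP$ transforms \eqref{eq-lin-cluster} into the analogous system for $\PP\AA$, yielding a linear isomorphism between the solution spaces; similarly $\{U_i\}\mapsto\{\PP U_i\}$ is a bijection between covers, so $\cl(\PP\AA)=\cl(\AA)$. Splitting $\MM'$ into blocks conforming to $\Re^m=\Re^r\oplus\Re^{m-r}$, \eqref{eq-lin-cluster} decomposes as $\MM'_{11}\AA_1=\AA_1\diag(\xi_1,\ldots,\xi_n)$ together with $\MM'_{21}\AA_1=0$, while $\MM'_{12}$ and $\MM'_{22}$ are unconstrained. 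Full row rank of $\AA_1$ forces $\MM'_{21}=0$, and since $\AA_1$ inherits no zero column from $\PP\AA$ (the bottom rows were already zero, so a zero column of $\AA_1$ would be a zero column of $\PP\AA$, hence of $\AA$), Lemma \ref{lem-lin-cluster-full} gives $\cl(\AA_1)$ degrees of freedom for $(\MM'_{11},\xi_1,\ldots,\xi_n)$. Adding the $r(m-r)+(m-r)^2=m(m-r)$ free entries of $\MM'_{12}$ and $\MM'_{22}$ yields $\dime(\SSc)=\cl(\AA_1)+m(m-r)$.

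It remains to establish the identity $\cl(\PP\AA)=\cl(\AA_1)+(m-r)$, from which the theorem follows by arithmetic. The lower bound comes from combining any maximal cover of $\AA_1$, seen as subspaces of $\Re^r\oplus\{0\}$, with $m-r$ one-dimensional complements spanning $\{0\}\oplus\Re^{m-r}$; these are patently linearly independent. For the upper bound, given any cover $\{U_1,\ldots,U_T\}$ of $\PP\AA$, I would shrink each $U_i$ that contains at least one column of $\PP\AA$ down to the span of those columns. This preserves linear independence (subspaces of linearly independent subspaces remain linearly independent) and the cover property, and pushes those $U_i$'s inside $\Re^r\oplus\{0\}$, so their restrictions form a cover of $\AA_1$ and there are at most $\cl(\AA_1)$ of them; the remaining $U_i$'s contain no column and are jointly linearly independent from $\Re^r\oplus\{0\}$, hence number at most $m-r$. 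The main obstacle I anticipate is this last counting, in particular verifying that the shrink-and-split operation genuinely preserves linear independence and that the "empty" subspaces contribute the right complementary bound $m-r$.
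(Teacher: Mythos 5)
Your proposal follows essentially the same route as the paper's proof: a change of basis to zero out the bottom $m-r$ rows of $\AA$, a block decomposition of $\MM$, an application of Lemma~\ref{lem-lin-cluster-full} to the full row-rank submatrix, a count of the unconstrained blocks, and the separate $+Z$ contribution from zero columns; you merely perform the two reductions in the opposite order and spell out the identity $\cl(\PP\AA)=\cl(\AA_1)+(m-r)$, which the paper dismisses as ``easy to check.'' The gap you flag does close: shrinking each column-bearing $U_i$ to the span of its columns preserves linear independence, and since $\AA_1$ has full row rank the shrunk subspaces together span all of $\Re^r\oplus\{0\}$, so the remaining (column-free, nonzero) $U_i$'s are linearly independent from $\Re^r\oplus\{0\}$ and hence number at most $m-r$.
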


\begin{proof}
First we suppose that there are no zero columns in $\AA$.
Let $\XX \in \Gl(m)$ and observe that $\cl(\XX\AA)=\cl(\AA)$.
Consider the linear system
\begin{equation}\label{eq-lin-cluster-X}
\MM \XX\AA = \XX\AA \, \diag(\xi_1,\ldots,\xi_n)
\end{equation}
and note that $(\MM,\xi_{[n]})$ is a solution of \eqref{eq-lin-cluster-X} if and only if $(\XX^{-1}\MM\XX,\xi_{[n]})$ is a solution of \eqref{eq-lin-cluster}.
Hence, taking an appropriate matrix $\XX$, we may assume without loss of generality that the last $m-r$ rows of $\AA$ are zero.

Let $\AAt$ be the $r\times n$ matrix consisting of the first $r$ rows of $\AA$.
Then $\AAt$ has full row-rank and it is easy to check that $\cl(\AA) = \cl(\AAt) + m-r$.
The matrix $\MM$ may be partitioned into the following blocks:
\[
\newcommand{\mycolsepmat}{4pt}%
\MM =
\begin{tikzpicture}[baseline={([yshift=-3pt]m.center)}]
\matrix[%
inner sep=2pt,
nodes={inner sep=10pt},
matrix of math nodes,
nodes in empty cells,
left delimiter={[},
right delimiter={]}%
] (m) at (-7,-1.4)
{
& & |[inner sep=\mycolsepmat]| & \\
& & |[inner sep=\mycolsepmat]| & \\
|[inner sep=\mycolsepmat]| & |[inner sep=\mycolsepmat]| & |[inner sep=\mycolsepmat]| & |[inner sep=\mycolsepmat]| \\
& & |[inner sep=\mycolsepmat]| & \\
};
\draw (m-1-1.north west) rectangle node [midway] {$\MM_1$} (m-2-2.south east);
\draw (m-4-1.north west) rectangle node [midway] {$\MM_2$} (m-4-2.south east);
\draw (m-1-4.north west) rectangle node [midway] {$\MM_3$} (m-4-4.south east);
\end{tikzpicture}
\hspace{1cm} \text{with} \;\;\; \left\{
\begin{array}{r@{\:}c@{\:}l}
\MM_1 &\in& \Re^{r \times r} \\
\MM_2 &\in& \Re^{(m-r) \times r} \\
\MM_3 &\in& \Re^{m \times (m-r)} \\
\end{array}
\right. \;.
\]
It is clear that $(\MM,\xi_{[n]})$ is a solution of \eqref{eq-lin-cluster} if and only if
\begin{equation}\label{eq-lin-cluster-part}
\MM_1\AAt=\AAt\,\diag(\xi_1,\ldots,\xi_n) \;\;\;\;\text{and}\;\;\;\; \MM_2\AAt=0 .
\end{equation}
From Lemma~\ref{lem-lin-cluster-full} and the fact that $\AAt$ has full row-rank, the solutions $(\MM_1,\xi_{[n]},\MM_2)$ of \eqref{eq-lin-cluster-part} form a vector space with dimension $\cl(\AAt)$.
On the other hand, there are no constraints on $\MM_3\in\Re^{m\times (m-r)}$.
This proves the assertion when $Z=0$.

Finally, observe that appending a zero column to $\AA$ does not change its clustering number and also does not change the space of solutions $\MM$ of \eqref{eq-lin-cluster}.
The only thing that changes is that the coefficient $\xi_{n+1}$ affected to this zero column might take any value.
Hence, the dimension of $\SSc$ is increased by one.
This concludes the proof of the theorem.
\end{proof}

\begin{remark}
For the interested reader, let us mention that the clustering number has an interpretation in terms of matroids: considering $\AA$ as a \emph{linear matroid} with ground set given by the columns of $\AA$ \cite[Chapter~39]{Sch03}, then we can show that the clustering number of $\AA$ is in fact equal to the number of \emph{connected components} (in the matroid sense \cite[Chapter~39]{Sch03}) of the matroid $\AA$.
Dedicated softwares exist to compute the connected components of a matroid.
See, e.g., \cite{SageNotebook}.
In fact, in the case of a linear matroid, the implementation in \cite{SageNotebook} is equivalent to dynamically computing the connected components of the graph $\GGc$ in Proposition~\ref{pro-clus-graph}.~\qedfill
\end{remark}

\subsection{Computation of the scaling and trace transformations}

Let $(\UU_\myF,\VV_\myF,\WW_\myF)$ and $({\UU'}_\myF,{\VV'}_\myF,{\WW'}_\myF)$ be two $F$-PDs of a matrix multiplication tensor $\PhiMat$.
We would like to know whether they are equivalent (see Section~\ref{sec-invar}) and, if they are, to compute the invariance transformations connecting $(\UU_\myF,\VV_\myF,\WW_\myF)$ to $({\UU'}_\myF,{\VV'}_\myF,{\WW'}_\myF)$.
At first, we assume that the permutation transformation is given and we focus on the computation of the scaling and trace transformations.
(We will see in the next subsection how we can compute this permutation transformation without trying all permutations of $\{1,\ldots,F\}$.)
Under mild assumptions on $(\UU_\myF,\VV_\myF,\WW_\myF)$ and $({\UU'}_\myF,{\VV'}_\myF,{\WW'}_\myF)$, we will see how to do this computation using linear algebra only.

First, we make two important comments.
In the sequel, we will always assume that the rank-$1$ terms\footnote{%
$f_r\otimes g_r$ denotes the \emph{tensor product} of functions $f_r$ and $g_r$, i.e., $(f_r\otimes g_r)(u,v)=f_r(u)g_r(v)$.%
}
$(f_r\otimes g_r)w_r$, $1\leq r\leq F$, in the $F$-PDs \eqref{eq-FPD} are \emph{linearly independent}.
Indeed, if one of the rank-$1$ terms $(f_r\otimes g_r)w_r$ can be decomposed as a linear combination of the other rank-$1$ terms, then it is easy to build a polyadic decomposition \eqref{eq-FPD} of $\PhiMat$ with $F-1$ terms (which would be extremely lucky and never happened in the numerical experiments we performed; see Section~\ref{sec-num}).\footnote{Even in this eventuality, if the decompositions $(\UU_\myF,\VV_\myF,\WW_\myF)$ and $({\UU'}_\myF,{\VV'}_\myF,{\WW'}_\myF)$ are equivalent and if the rank-$1$ terms of $(\UU_\myF,\VV_\myF,\WW_\myF)$ are linearly dependent, then the rank-$1$ terms of $({\UU'}_\myF,{\VV'}_\myF,{\WW'}_\myF)$ are also linearly dependent.
Moreover, the $F_*$-term polyadic decompositions $(\UU_{[F^*]},\VV_{[F^*]},\WW_{[F^*]})$ and $({\UU'}_{[F^*]},{\VV'}_{[F^*]},{\WW'}_{[F^*]})$, $1\leq F_*<F$, obtained by removing in $(\UU_\myF,\VV_\myF,\WW_\myF)$ the linearly dependent rank-$1$ terms and the \emph{corresponding terms} in $({\UU'}_\myF,{\VV'}_\myF,{\WW'}_\myF)$, are equivalent and contain both linearly independent rank-$1$ terms.
Hence, modulo a little extra work (due to the non-uniqueness in the choice of linearly dependent terms to remove), we may always reduce to the case where $(\UU_\myF,\VV_\myF,\WW_\myF)$ contains only linearly independent rank-$1$ terms.}

The second comment is summarized in the following theorem:

\begin{theorem}\label{thm-rank-fact-matr}
Let $\PhiMat$ be a matrix multiplication tensor and let $(\UU_\myF,\VV_\myF,\WW_\myF)$ be an $F$-PD of $\PhiMat$.
Let $I \subseteq \{1,\ldots,F\}$ be a subset of indices with $\lvert I\rvert + n \geq F + 1$.
Then the family $\{\UU_r\}_{r\in I}$ fully spans $\Re^{p\times m}$.
\end{theorem}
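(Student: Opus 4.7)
The plan is to argue by contradiction. Suppose that $\{\UU_r\}_{r\in I}$ fails to span $\Re^{p\times m}$. The trace pairing $(\UU,\AA)\mapsto\trace(\UU\AA)$ identifies $\Re^{p\times m}$ with the dual $(\Re^{m\times p})^*$, as already exploited in \eqref{eq-trace}. Under this identification, non-spanning of $\{\UU_r\}_{r\in I}$ is equivalent to the existence of a nonzero matrix $\AA\in\Re^{m\times p}$ such that $f_r(\AA)=\trace(\UU_r\AA)=0$ for every $r\in I$. I would fix such an $\AA$.

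The next step is to feed this $\AA$ into the $F$-PD identity. For every $\BB\in\Re^{p\times n}$,
\[
\AA\BB \;=\; \PhiMat(\AA,\BB) \;=\; \sum_{r=1}^F f_r(\AA)g_r(\BB)\WW_r \;=\; \sum_{r\in J} f_r(\AA)g_r(\BB)\WW_r,
\]
where $J=\{1,\ldots,F\}\setminus I$ and, by hypothesis, $|J|=F-|I|\leq n-1$. Hence the image of the linear map $\BB\mapsto\AA\BB$ from $\Re^{p\times n}$ to $\Re^{m\times n}$ is contained in $\spann\{\WW_r\}_{r\in J}$, so its dimension is at most $n-1$.

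I would then close the argument by a lower bound on that same image. Since $\AA\neq 0$, writing $\AA=EF$ with $E\in\Re^{m\times k}$ of full column rank $k=\rank(\AA)\geq 1$ and $F\in\Re^{k\times p}$ surjective, the image of $\BB\mapsto\AA\BB=E(F\BB)$ equals $E\cdot\Re^{k\times n}$ and therefore has dimension exactly $kn\geq n$. The contradiction $n\leq n-1$ concludes the proof.

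The main (essentially only) conceptual step is the duality translation in the first paragraph; everything else is a one-shot dimension count. An alternative phrasing would work with the matrix whose rows are the vectorizations $\vecc(\UU_r^\top)$, $r\in I$, and its kernel, but the argument is the same.
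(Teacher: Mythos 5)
Your proof is correct. It shares the paper's opening move---dualize via the trace pairing to obtain a nonzero $\AA\in\Re^{m\times p}$ with $f_r(\AA)=0$ for all $r\in I$---but the dimension count that follows is carried out along a different route. The paper's proof introduces two auxiliary subspaces of $\Re^{p\times n}$: the annihilator $Y$ of $\{\VV_r\}_{r\notin I}$ (so $\dime(Y)\geq pn-n+1$) and the range $Z$ of $\CC\mapsto(\CC\AA)^\top$ (so $\dime(Z)\geq n$); it then expands $\trace(\BB\ZZ^\top)$ through the decomposition to show $Y\perp Z$, whence $\dime(Y)+\dime(Z)\leq pn$, a contradiction. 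You instead look directly at the single linear map $L:\BB\mapsto\AA\BB$ and squeeze its rank from both sides: at most $\lvert J\rvert\leq n-1$ because $\mathrm{im}(L)\subseteq\spann\{\WW_r\}_{r\in J}$, and at least $n$ by a full-rank factorization of $\AA$. The two arguments are linked by rank--nullity for $L$ (the paper's orthogonality $Y\perp Z$ effectively places $Y$ inside $\ker(L)$, whose dimension is $pn-\rank(\AA)\,n\leq pn-n$), but yours bypasses the trace-orthogonality step and never uses the $\VV$-factors, making it shorter and more elementary while yielding the same bound.
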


\begin{proof}
Suppose, on the contrary, that there exists a subset $I \subseteq \{1,\ldots,F\}$ with size $\ell$ such that $\ell+n \geq F+1$ and $\spann(\{\UU_r\}_{r\in I})\neq\Re^{p\times m}$.
Then there exists $\AA_*\in\Re^{m\times p}\setminus\{0\}$ such that $\trace(\UU_r\AA_*)=0$ for each $r\in I$.
Denote by $Y$ the vector space of $p\times n$ matrices $\BB$ such that $\trace(\VV_r\BB)=0$ for each $r\in F\setminus I$.
Since $\lvert F\setminus I\rvert=F-\ell\leq n-1$, we have that $\dime(Y)\geq pn-n+1$.

Now define $Z$ as the vector space of all matrices $\ZZ=(\CC\AA_*)^\top\in\Re^{p\times n}$ for some $\CC\in\Re^{n\times m}$.
Since $\AA_*\neq0$, the dimension of $Z$ is at least $n$.
Now let $\BB\in Y$ and $\ZZ=(\CC\AA_*)^\top\in Z$ and observe that
\begin{align*}
\trace(\BB\ZZ^\top) &= \trace(\AA_*\BB\CC) = \trace\big(\PhiMat(\AA_*,\BB)\CC\big) \\
&= \sum_{r=1}^F \trace(\UU_r\AA_*) \, \trace(\VV_r\BB) \, \trace(\WW_r\CC) .
\end{align*}
From the definitions of $\AA_*$ and $Y$, we conclude that $\trace(\BB\ZZ^\top)=0$.
Thus $Y\cap Z=\{0\}$, so $\dime(Y)+\dime(Z) = pn-n+1+n>pn$.
This contradicts $Y,Z\subseteq\Re^{p\times n}$.
\end{proof}

\begin{remark}\label{rem-rank-fact}
Note that Theorem~\ref{thm-rank-fact-matr} applies, \emph{mutatis mutandis}, to $\VV_\myF$ and $\WW_\myF$.
To see this, it suffices to observe that if $(\UU_\myF,\VV_\myF,\WW_\myF)$ is an $F$-PD of $\PhiMat$ then $(\WW_\myF,\UU_\myF,\VV_\myF)$ and $(\VV_\myF,\WW_\myF,\UU_\myF)$ provide $F$-PDs of $\Phi_{n,m,p}$ and $\Phi_{p,n,m}$ respectively.~\qedfill
\end{remark}

Among other conclusions of this theorem, we get that $\spann(\{\UU_r\}_{1\leq r\leq F})=\Re^{p\times m}$.
Indeed, it suffices to apply Theorem~\ref{thm-rank-fact-matr} with $I = \{1,\ldots,F\}$.
Similar conclusions hold for $\VV_\myF$ and $\WW_\myF$.

We now present the algorithm to compute the scaling and trace transformations between $(\UU_\myF,\VV_\myF,\WW_\myF)$ and $({\UU'}_\myF,{\VV'}_\myF,{\WW'}_\myF)$ or conclude that no such transformations exist.
To simplify the notation, it will be useful to consider $\UU_r$, $\VV_r$ and $\WW_r$ as column vectors and gather them into matrices.
Therefore, we define
\begin{equation}\label{eq-UUt}
\UUt=[\vecc(\UU_1),\ldots,\vecc(\UU_F)] \in \Re^{pm\times F}
\end{equation}
where $\vecc(\cdot)$ is the vectorization (column stacking) operator.
Similarly, we define $\VVt\in\Re^{np\times F}$ and $\WWt\in\Re^{mn\times F}$, and also $\UUt' \in \Re^{pm\times F}$, $\VVt'\in\Re^{np\times F}$ and $\WWt'\in\Re^{mn\times F}$.
The algorithm is guaranteed to work if we make the following assumption on $(\UU_\myF,\VV_\myF,\WW_\myF)$:

\begin{assumption}\label{ass-cluster}
Let $\UUt$, $\VVt$ and $\WWt$ be defined as above.
We assume that either $\UUt$, $\VVt$ or $\WWt$ has clustering number equal to one.
\end{assumption}

\begin{remark}
It is not difficult to see that if $(\UU_\myF,\VV_\myF,\WW_\myF)$ and $({\UU'}_\myF,{\VV'}_\myF,{\WW'}_\myF)$ are equivalent, then $\cl(\UUt)=\cl(\UUt')$, $\cl(\VVt)=\cl(\VVt')$ and $\cl(\WWt)=\cl(\WWt')$.
Thus the clustering numbers (which can be efficiently computed) already offer us a way to eliminate $F$-PDs that are not equivalent.
Therefrom, Assumption~\ref{ass-cluster} can be rephrased (without loss of generality) as follows: either $\cl(\UUt)=\cl(\UUt')=1$, or $\cl(\VVt)=\cl(\VVt')=1$, or $\cl(\WWt)=\cl(\WWt')=1$.~\qedfill
\end{remark}

We will see in Section~\ref{sec-num-equi} that Assumption~\ref{ass-cluster} is satisfied for $100\%$ of the randomly computed samples on which we have performed numerical experiments.
The goal of the algorithm presented in this subsection is to compute matrices $\PP \in \Gl(m)$, $\QQ \in \Gl(p)$ and $\RR \in \Gl(n)$, and scaling coefficients $\lambda_1,\ldots,\lambda_F$, $\mu_1,\ldots,\mu_F$ and $\nu_1,\ldots,\nu_F$ such that $\lambda_r\mu_r\nu_r=1$ and
\begin{equation}\label{eq-scale-trace}
\lambda_r^{}\UU'_r=\QQ^{-1}\UU_r^{}\PP ,\; \mu_r^{}\VV'_r=\RR^{-1}\VV_r^{}\QQ ,\; \nu_r^{}\WW'_r=\PP^{-1}\WW_r^{}\RR
\end{equation}
for every $1\leq r\leq F$.
The above conditions are nonlinear in $\PP$, $\QQ$, $\RR$, $\lambda_r$, $\mu_r$ and $\nu_r$.
However, relying on the assumptions on $(\UU_\myF,\VV_\myF,\WW_\myF)$ and $({\UU'}_\myF,{\VV'}_\myF,{\WW'}_\myF)$, these conditions can be reduced to linear matrix equations.

First of all, we show that the requirement $\lambda_r\mu_r\nu_r=1$ can be dropped.
Indeed, suppose that $(\UU_\myF,\VV_\myF,\WW_\myF)$ and $({\UU'}_\myF,{\VV'}_\myF,{\WW'}_\myF)$ satisfy \eqref{eq-scale-trace}, and let $f'_r$, $g'_r$ and $w'_r$ be given by \eqref{eq-trace} with $({\UU'}_\myF,{\VV'}_\myF,{\WW'}_\myF)$.
Also let $\Phi''$ be given by \eqref{eq-FPD} and \eqref{eq-trace} with $\UU''_r=\QQ^{-1}\UU_r^{}\PP$, $\VV''_r=\RR^{-1}\VV_r^{}\QQ$, $\WW''_r=\PP^{-1}\WW_r^{}\RR$.
Then $\Phi''=\PhiMat$ (trace transformations) and
\[
\Phi''(u,v) = \sum_{r=1}^F \lambda_r^{}\mu_r^{}\nu_r^{} \, f'_r(u) g'_r(v) w'_r .
\]
From the linear independence assumption on the rank-$1$ terms $(f'_r\otimes g'_r)w'_r$, $1\leq r\leq F$, we conclude that $\lambda_r\mu_r\nu_r=1$ is trivially satisfied if \eqref{eq-scale-trace} holds.

According to Assumption~\ref{ass-cluster}, we assume for the rest of this subsection that $\cl(\UUt)=\cl(\UUt')=1$.
We denote by $\AA\otimes\BB$ the \emph{Kronecker product} of two matrices $\AA$ and $\BB$, and we will use the following property of the vectorization operator:
\[
\vecc(\AA\XX\BB) = (\BB^\top\!\otimes\AA)\,\vecc(\XX) .
\]
Then the first equation of \eqref{eq-scale-trace} is equivalent to
\begin{equation}\label{eq-equi-U}
(\PP^\top\!\otimes\QQ^{-1})\UUt = \UUt'\,\diag(\lambda_1,\ldots,\lambda_F) .
\end{equation}
Considering $\PP^\top\!\otimes\QQ^{-1}$ as a single matrix $\MM\in\Re^{pm\times pm}$, \eqref{eq-equi-U} becomes
\begin{equation}\label{eq-equi-U-lin}
\MM\UUt = \UUt'\,\diag(\lambda_1,\ldots,\lambda_F)
\end{equation}
which is linear in $\MM$ and $\lambda_{[F]}=(\lambda_1,\ldots,\lambda_F)$.
The fact that no unwanted solutions are created by this linearization is shown in the following developments.

Let $\AA$ and $\AA'$ be two $m\times n$ matrices with full row-rank, containing no zero columns and with $\cl(\AA)=\cl(\AA')=1$.
Then consider the linear system
\begin{equation}\label{eq-lin-equi}
\MM \AA = \AA' \, \diag(\xi_1,\ldots,\xi_n)
\end{equation}
with variables $\MM\in\Re^{m\times m}$ and $\xi_{[n]} = (\xi_1,\ldots,\xi_n)\in\Re^n$.
This problem is close to problem \eqref{eq-lin-cluster} except that we allow $\AA\neq\AA'$.
Let $\SSc$ be the vector space of $(\MM,\xi_{[n]})$ that are solutions of \eqref{eq-lin-equi}.

\begin{lemma}\label{lem-dim-equi}
Let $\SSc$ be defined as above.
If $\SSc$ contains a solution $(\MM,\xi_{[n]})$ such that $\xi_i \neq 0$ for every $1\leq i\leq n$, then $\dime(\SSc)=1$.
\end{lemma}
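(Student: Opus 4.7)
The plan is to reduce the inhomogeneous system \eqref{eq-lin-equi} to the homogeneous clustering system \eqref{eq-lin-cluster} for $\AA$, and then invoke Theorem~\ref{thm-lin-cluster-gen} (or equivalently Lemma~\ref{lem-lin-cluster-full} since $\AA$ has full row-rank and no zero columns) with $\cl(\AA)=1$, $Z=0$ and $r=m$ to get that the associated solution space has dimension $1$.

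First I would fix the hypothesized nonvanishing solution $(\MM_0,\xi_{[n]}^0)\in\SSc$ and show that $\MM_0$ is invertible. This is because $\MM_0\AA=\AA'\diag(\xi_1^0,\ldots,\xi_n^0)$ has row-rank $m$ (its columns are nonzero scalar multiples of columns of $\AA'$, which has full row-rank), and since $\rank(\MM_0\AA)\leq\rank(\MM_0)\leq m$, this forces $\MM_0$ to have rank $m$.

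Next, given any other solution $(\MM,\xi_{[n]})\in\SSc$, I would rewrite $\MM\aa_i=\xi_i\aa'_i$ using the relation $\aa'_i=(\xi_i^0)^{-1}\MM_0\aa_i$ from the reference solution, obtaining $\MM\aa_i=(\xi_i/\xi_i^0)\MM_0\aa_i$ for every column $\aa_i$ of $\AA$. Setting $\MM'=\MM_0^{-1}\MM$ and $\eta_i=\xi_i/\xi_i^0$, this reads
\[
\MM'\AA=\AA\,\diag(\eta_1,\ldots,\eta_n),
\]
which is exactly the homogeneous system \eqref{eq-lin-cluster} for $\AA$. Conversely, any solution $(\MM',\eta_{[n]})$ of that system yields a solution $(\MM_0\MM',(\xi_i^0\eta_i)_{i})$ of \eqref{eq-lin-equi}, so the map $(\MM,\xi_{[n]})\mapsto(\MM_0^{-1}\MM,(\xi_i/\xi_i^0)_i)$ is a linear bijection between $\SSc$ and the solution space of \eqref{eq-lin-cluster} for $\AA$.

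Finally, since $\AA$ has full row-rank, no zero columns, and $\cl(\AA)=1$, Lemma~\ref{lem-lin-cluster-full} gives that the solution space of \eqref{eq-lin-cluster} has dimension $\cl(\AA)=1$, and the linear bijection above transfers this to $\dime(\SSc)=1$. The only subtle point is verifying invertibility of $\MM_0$ and checking that the correspondence is genuinely a linear bijection; neither is hard given that all $\xi_i^0$ are nonzero and $\AA'$ has full row-rank.
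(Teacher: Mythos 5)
Your proof is correct. Both your argument and the paper's start by observing that the reference solution $\MM_0$ is forced to be invertible because $\AA'\,\diag(\xi^0_1,\ldots,\xi^0_n)$ has full row rank. After that the routes diverge slightly in style. The paper says it ``concludes with a similar reasoning as for the first part of the proof of Lemma~\ref{lem-lin-cluster-full}'', i.e.\ it re-runs the graph-theoretic argument in the setting $\MM\AA=\AA'\diag(\xi)$, using the reference solution to identify the coordinate supports of the columns of $\AA'$ with those of $\AA$. You instead make the reduction to Lemma~\ref{lem-lin-cluster-full} explicit: the map $(\MM,\xi_{[n]})\mapsto(\MM_0^{-1}\MM,\,(\xi_i/\xi_i^0)_i)$ is a linear isomorphism from $\SSc$ onto the solution space of the homogeneous system $\MM'\AA=\AA\,\diag(\eta)$, whose dimension is $\cl(\AA)=1$ by Lemma~\ref{lem-lin-cluster-full} (valid since $\AA$ has full row rank and no zero column). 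This is a cleaner, more modular version of the same idea: rather than repeating the connected-components computation, you exhibit a change of variables and invoke the earlier lemma as a black box. It also makes plain why the hypotheses $\cl(\AA')=1$ and ``no zero columns of $\AA'$'' are not separately needed once $\cl(\AA)=1$ and a nonvanishing reference solution are at hand. The one small thing to mention explicitly, which you flag at the end, is the direction of the bijection needed: showing that every solution of \eqref{eq-lin-equi} maps into the solution space of \eqref{eq-lin-cluster} gives $\dime(\SSc)\leq1$, and the existence of the nonzero reference solution already gives $\dime(\SSc)\geq1$; the converse direction of the correspondence, while true, is not strictly required.
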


\begin{proof}
Let $(\MM,\xi_{[n]})$ be a solution of \eqref{eq-lin-equi} with $\xi_i \neq 0$ for every $1\leq i\leq n$.
We have assumed that $\AA'$ has full row-rank and thus $\AA' \, \diag(\xi_1,\ldots,\xi_n)$ has full row-rank as well.
Hence, $\MM$ must be invertible.
In a similar way as in the proof of Lemma~\ref{lem-lin-cluster-full}, we may assume without loss of generality that the first $m$ columns of $\AA$ span $\Re^m$.
Hence, the first $m$ columns of $\AA'$ span $\Re^m$ too.
We conclude the proof with a similar reasoning as for the first part of the proof of Lemma~\ref{lem-lin-cluster-full}.
\end{proof}

Hence, two cases can happen when solving \eqref{eq-equi-U}: (i)~either the linearized system \eqref{eq-equi-U-lin} admits no solutions with $\lambda_r\neq0$ for every $1\leq r\leq F$; in this case, we conclude that the two $F$-PDs are not (scaling+trace)-equivalent; or (ii)~the solution space $\SSc$ of \eqref{eq-equi-U-lin} is one-dimensional and thus taking an arbitrary nonzero $(\MM,\lambda_\myF)\in\SSc$, it is easy to check whether $\MM$ has the form $\MM=\PP^\top\!\otimes\QQ^{-1}$ for some $\PP\in\Gl(m)$ and $\QQ\in\Gl(p)$.
If the latter does not hold, then the two $F$-PDs are not (scaling+trace)-equivalent.
Otherwise, $\PP$ and $\QQ$ are the \emph{unique} (up to a scalar multiplication) matrices involved in the invariance transformations \eqref{eq-scale-trace}.

Now that we have determined $\PP$ and $\QQ$, we consider the following linear system:
\begin{equation}\label{eq-lin-V-W}
\left\lbrace\begin{array}{r@{\;\,}c@{\;\,}ll}
\RR\VV'_r &=& \tilde{\mu}_r^{}\VV_r^{}\QQ & \quad \text{for all $1\leq r\leq F$} , \\[2pt]
\WW_r'\RR &=& \tilde{\nu}_r^{}\PP\WW_r^{} & \quad \text{for all $1\leq r\leq F$} ,
\end{array}\right.
\end{equation}
where the unknowns are $\RR\in\Re^{n\times n}$ and $\tilde{\mu}_{[F]},\tilde{\nu}_{[F]}\in\Re^F$ for $1\leq r\leq F$.
If \eqref{eq-lin-V-W} admits no solutions $(\RR,\tilde{\mu}_\myF,\tilde{\nu}_\myF)$ with $\tilde{\mu}_r\neq0$ and $\tilde{\nu}_r\neq0$ for every $1\leq r\leq F$, then we conclude that $(\UU_\myF,\VV_\myF,\WW_\myF)$ and $({\UU'}_\myF,{\VV'}_\myF,{\WW'}_\myF)$ are not (scaling+trace)-equivalent.
On the other hand, if $\tilde{\mu}_r\neq0$ and $\tilde{\nu}_r\neq0$ for every $1\leq r\leq F$, then $\RR$ is invertible because $\spann(\{\VV_r\QQ\}_{1\leq r\leq F})=\Re^{n\times p}$ (see Remark~\ref{rem-rank-fact}).
The $6$-tuple $(\PP,\QQ,\RR,\lambda_\myF,\mu_\myF,\nu_\myF)$ with $\mu_r^{}=\tilde{\mu}_r^{-1}$ and $\nu_r^{}=\tilde{\nu}_r^{-1}$ then provides a solution to the (scaling+trace)-equivalence problem \eqref{eq-scale-trace}.

\subsection{Computation of the permutation transformation}\label{subsec-perm}

In the previous subsection, we have described a procedure to compute the scaling and trace transformations connecting two $F$-PDs $(\UU_\myF,\VV_\myF,\WW_\myF)$ and $({\UU'}_\myF,{\VV'}_\myF,{\WW'}_\myF)$ or conclude that no such transformations exist.
The equivalence of $(\UU_\myF,\VV_\myF,\WW_\myF)$ and $({\UU'}_\myF,{\VV'}_\myF,{\WW'}_\myF)$ can then be decided in finite time by trying every permutation $\sigma\in\Perm_F$ and testing the (scaling+trace)-equivalence of $\sigma((\UU_\myF,\VV_\myF,\WW_\myF))$\footnote{where $\sigma((\UU_\myF,\VV_\myF,\WW_\myF)) = (\sigma(\UU_\myF),\sigma(\VV_\myF),\sigma(\WW_\myF))$ and $\sigma(\XX_\myF)$ is the permuted $F$-uple $(\XX_{\sigma(1)},\ldots,\XX_{\sigma(F)})$.} and $({\UU'}_\myF,{\VV'}_\myF,{\WW'}_\myF)$.
Due to the combinatorial growth of $\lvert\Perm_F\rvert$, an exhaustive exploration of $\Perm_F$ is generally not feasible in practice.
In this section, we explain how to efficiently decide whether the two $F$-PDs are equivalent without trying all permutations $\sigma\in\Perm_F$.

\begin{definition}\label{def-simul-sim}
Let $\AA_{[m]}=(\AA_1,\ldots,\AA_m)$ and $\BB_{[m]}=(\BB_1,\ldots,\BB_m)$ be two ordered sets of $n\times n$ matrices.
We say that $\AA_{[m]}$ and $\BB_{[m]}$ are \emph{simultaneously similar} if there exists $\XX\in\Gl(n)$ such that $\AA_i=\XX^{-1}\BB_i\XX$ for every $1\leq i\leq m$.
\end{definition}

Let $(\UU_\myF,\VV_\myF,\WW_\myF)$ and $({\UU'}_\myF,{\VV'}_\myF,{\WW'}_\myF)$ be two $F$-PDs of the matrix multiplication tensor $\PhiMat$.
For each $1\leq r\leq F$, define the matrices $\MM_r^{}=\WW_r^{}\VV_r^{}\UU_r^{}$ and $\MM'_r=\WW'_r\VV'_r\UU'_r$.
If $\sigma((\UU_\myF,\VV_\myF,\WW_\myF))$ and $({\UU'}_\myF,{\VV'}_\myF,{\WW'}_\myF)$ are (scaling+trace)-equivalent for some $\sigma\in\Perm_F$, then from \eqref{eq-scale-trace} we have
\begin{equation}\label{eq-simul-simil}
\begin{split}
\MM'_r &= \lambda_r^{}\mu_r^{}\nu_r^{}\,\MM'_r \\
&= (\PP^{-1}\WW_{\sigma(r)}\RR) (\RR^{-1}\VV_{\sigma(r)}\QQ) (\QQ^{-1}\UU_{\sigma(r)}\PP) = \PP^{-1}\MM_{\sigma(r)}\PP.
\end{split}
\end{equation}
In other words, $\sigma(\MM_\myF)$ and ${\MM'}_\myF$ are simultaneously similar.

We define a \emph{partial permutation} of $\{1,\ldots,F\}$ as any injective function $\pi$ from $I\subseteq\{1,\ldots,F\}$ into $\{1,\ldots,F\}$.
We say that $\pi$ \emph{coincides} with the (total) permutation $\sigma\in\Perm_F$ if $\pi(r)=\sigma(r)$ for every $r\in I$.
If $\sigma$ is as in \eqref{eq-simul-simil} and $\pi$ coincides with $\sigma$, then it is clear that
\begin{equation}\label{eq-sim}
\pi\big((\MM_r^{})_{r\in I}^{}\big) \;\;\;\text{and}\;\;\; (\MM'_r)_{r\in I}^{} \;\;\;\text{are simultaneously similar.}
\end{equation}

The following notation will be useful for the description of the algorithm for computing $\sigma$.
For $F'\in\{0,\ldots,F\}$, we denote by $\Inj(F',F)$ the set of injective functions from $\{1,\ldots,F'\}$ into $\{1,\ldots,F\}$.
Each function of $\Inj(F',F)$ is seen as a subset of $\{1,\ldots,F'\}\times\{1,\ldots,F\}$.
The length of $\pi\in\Inj(F',F)$ is simply $\lvert\pi\rvert=F'$, and the range of $\pi\in\Inj(F',F)$ is defined as $\Range(\pi)=\{\,\pi(r) : 1\leq r\leq F'\,\}$.

The idea behind the algorithm to compute $\sigma$ is the following.
First, we start from a partial permutation $\pi\in\Inj(F',F)$ with $F'$ small.
We check whether $\pi$ is susceptible to coincide with $\sigma$ by checking whether \eqref{eq-sim} is satisfied or not (see also Remark~\ref{rem-sim}).
If \eqref{eq-sim} is satisfied, then we try to extend $\pi$ to a larger partial permutation $\pi^+=\pi\cup \{(F'+1,\ell)\}$ with $\ell\in\{1,\ldots,F\}\setminus\Range(\pi)$.
We check again whether $\pi^+$ is susceptible to coincide with $\sigma$ according to \eqref{eq-sim}.
If this is the case, we repeat the process with $\pi^+$.
Otherwise, we try other extensions $\pi\cup \{(F'+1,\ell')\}$.
If all possible extensions $\pi\cup \{(F'+1,\ell)\}$, $\ell\in\{1,\ldots,F\}\setminus\Range(\pi)$, have been tried and none of them coincides with $\sigma$, then we restart the process with the restriction $\pi^-=\pi\vert_{\{1,\ldots,F'-1\}}\in\Inj(F'-1,F)$ and try to extend $\pi^-$ to $\pi^-\cup\{(F',\ell)\}$ with $\ell\in\{1,\ldots,F\}\setminus\Range(\pi)$.

When we reach a full permutation $\pi\in\Inj(F,F)=\Perm_F$, then we can decide whether the permuted decomposition $\pi(\UU_\myF,\VV_\myF,\WW_\myF)$ and the decomposition $({\UU'}_\myF,{\VV'}_\myF,{\WW'}_\myF)$ are (scaling+trace)-equivalent using the procedure of the previous subsection.
If they are, then we have found the correct permutation transformation between $(\UU_\myF,\VV_\myF,\WW_\myF)$ and $({\UU'}_\myF,{\VV'}_\myF,{\WW'}_\myF)$.
Otherwise, we continue to search for another permutation $\pi$.

When the algorithm terminates, if the two $F$-PDs are equivalent, the algorithm is guaranteed to give the corresponding scaling, trace and permutation transformations.
If they are not equivalent, the algorithm will also detect it because all permutations $\pi\in\Perm_F$ will be rejected: either because the partial permutation $\pi\vert_{\{1,\ldots,F'-1\}}$ has been rejected previously in the algorithm, or because $\pi$ does not lead to (scaling+trace)-equivalent decompositions.
Clearly, the computational savings (compared to trying all permutations) are interesting if most of the ``incorrect'' permutations $\pi$ are rejected in a early stage, i.e., $\pi\vert_{\{1,\ldots,F'-1\}}$ is rejected for $F'\ll F$.
The computational aspects are discussed in the paragraphs below.

\newcommand{\forcond}{$i=0$ \KwTo $n$}
\SetKwProg{Fn}{Function}{}{end function}
\SetKwFunction{FRecurs}{FnRecursive}
\SetKwData{EquivFound}{EquivFound}
\SetAlgoLongEnd
\newcommand{\mytrue}{\mbox{\upshape true}}
\newcommand{\myfalse}{\mbox{\upshape false}}

We have implemented the algorithm as the recursive function described in Algorithm~\ref{algo-equiv}.
The recursive function must be called with $(\pi,b)=\FRecurs{$\varnothing,\myfalse$}$.
If the output $b$ is true, then the two decompositions are equivalent and the permutation transformation is given by $\pi$.
On the other hand, if $b$ is false, then the two $F$-PDs are not equivalent.

\begin{algorithm2e}[h]
\vspace{2pt}
\KwData{$\pi\in\bigcup_{n=0}^F \Inj(n,F)$ and $b$ is a boolean.}
\vspace{4pt}
\Fn{\FRecurs{$\pi,b$}}{
\uIf{$b=\mytrue$}{
    Return $(\pi,b)$\;
}
\uElseIf(\tcc*[f]{$[\star]$}){$\lvert\pi\rvert=F$}{
    \eIf(\tcc*[f]{$[\clubsuit]$}){$\pi(\UU_\myF,\VV_\myF,\WW_\myF)$ and $({\UU'}_\myF,{\VV'}_\myF,{\WW'}_\myF)$ are (scaling+trace)-equivalent}{
        Return $(\pi,\mytrue)$\;
    }{
        Return $(\varnothing,\myfalse)$\;
    }
}
\Else{
    \ForEach{$\ell\in\{1,\ldots,F\}\setminus\Range(\pi)$}{
        Let $\pi^+=\pi\cup \{(\lvert\pi\rvert+1,\ell)\}$\;
        \If{\eqref{eq-sim} holds with $\pi^+$}{
            Let $(\pi',b')=\FRecurs{$\pi^+,b$}$\;
            \If{$b'=\mytrue$}{
                Return $(\pi',\mytrue)$\;
            }
        }
    }
    Return $(\varnothing,\myfalse)$\tcc*[r]{$[\spadesuit]$}
}
}
\caption{Recursive function to decide whether two $F$-PDs are equivalent.}
\label{algo-equiv}
\end{algorithm2e}

\begin{remark}\label{rem-sim}
Checking the simultaneous similarity of $\AA_{[m]}=(\AA_1,\ldots,\AA_m)$ and $\BB_{[m]}=(\BB_1,\ldots,\BB_m)$ can be approached by solving a linear system
\[
\XX\AA_i - \BB_i\XX = 0 \qquad\text{for all}\quad 1\leq i\leq m,
\]
with unknown $\XX\in\Re^{n\times n}$, and check whether there exists a solution $\XX$ that is invertible.
However, this approach is not efficient and not robust to rounding errors.
Therefore, we have used a different approach.
Consider scalar coefficients $\alpha_1,\ldots,\alpha_m\in\Re$.
A necessary condition for $\AA_{[m]}$ and $\BB_{[m]}$ to be simultaneously similar is that $\sum_{i=1}^m \alpha_i\AA_i$ and $\sum_{i=1}^m \alpha_i\BB_i$ have the same eigenvalues counted with multiplicity.
By doing this for randomly generated sets of coefficients $\alpha_1,\ldots,\alpha_m\in\Re$, this gives a very efficient way to check the simultaneous similarity of $\AA_{[m]}$ and $\BB_{[m]}$ with high probability.~\qedfill
\end{remark}

\begin{remark}\label{rem-algo-general}
Strictly speaking, the use of Algorithm~\ref{algo-equiv} supposes that Assumption~\ref{ass-cluster} is satisfied.
One could wonder whether we can still obtain some information from Algorithm~\ref{algo-equiv} even if the assumption is not satisfied.
The answer is yes.
We modify the algorithm as follows.
If condition $[\star]$ is satisfied, then instead of testing whether the $F$-PDs are (scaling+trace)-equivalent, we directly output $(\pi,\mytrue)$ and exit the function.
With this modified algorithm, if the call of the function $(\pi,b)=\FRecurs{$\varnothing,\myfalse$}$ returns the value $b=\mytrue$, then we cannot say anything about the equivalence of the two $F$-PDs.
However, if $b=\myfalse$, then we are sure that the two $F$-PDs are not equivalent.~\qedfill
\end{remark}

Numerical experiments for the algorithm described in this section are presented in Section~\ref{sec-num-equi}.
Regarding the complexity of the algorithm, the computation of the scaling and trace transformations relies only on solving linear systems of equations.
The system \eqref{eq-equi-U-lin} consists of $mpF$ equations with $(mp)^2+F$ variables.
Because $F\geq mp$ (consequence of Theorem~\ref{thm-rank-fact-matr}), the complexity of solving \eqref{eq-equi-U-lin} is at most $\BigO([Fmp]^3)$.
Similarly, solving \eqref{eq-lin-V-W} requires at most $\BigO([F(pn+nm)]^3)$.
Therefore, the complexity of the (scaling+trace)-equivalence part of the algorithm is bounded by $\BigO([F\max\{mp,pn,nm\}]^3)$.

The complexity of the permutation computation part is more difficult to evaluate.
It is obviously bounded by $F!$.
Hence, an upper bound for the global complexity of the algorithm is $\BigO(F![F\max\{mp,pn,nm\}]^3)$.
However, in all the numerical experiments we have performed (see Section~\ref{sec-num-equi}), it appears that Algorithm~\ref{algo-equiv} never reaches step $[\star]$ more than once.
In fact, all the partial permutations $\pi$ for which the algorithm reaches step $[\spadesuit]$ satisfy $\lvert \pi\rvert \leq 9$ (see Table~\ref{tab-equiv}--Depth).
In other words, whenever $\pi\in\Inj(F',F)$ could not lead to a correct permutation, then the algorithm detected it rapidly.
Hence, in practice (for our numerical experiments), the computational complexity of the complete algorithm is $\BigO([F\max\{mp,pn,nm\}]^3)$.

\section{Characteristic polynomials and discretizable decompositions}\label{sec-discrete}

Drawing upon the simultaneous similarity property \eqref{eq-simul-simil} of equivalent decompositions, we introduce a simple necessary criterion for a decomposition of a matrix multiplication tensor to be equivalent to a discrete decomposition.

\begin{definition}\label{def-discretizable}
A decomposition $(\UU_\myF,\VV_\myF,\WW_\myF)$ is \emph{discretizable} if it is equivalent to a discrete decomposition $({\UU'}_\myF,{\VV'}_\myF,{\WW'}_\myF)$.
[Clearly, it is necessary and sufficient to require that $(\UU_\myF,\VV_\myF,\WW_\myF)$ is only (scaling+trace)-equivalent to $({\UU'}_\myF,{\VV'}_\myF,{\WW'}_\myF)$.]
\end{definition}

We refer the reader to Section~\ref{sec-preli} for the definition and relevance of discrete decompositions in the context of fast matrix multiplication.
Numerical algorithms for computing polyadic decompositions of matrix multiplication tensors do not lead in general to solutions of this kind.
The possibility to transform a general decomposition into a discrete one using invariance transformation opens the door to a new generation of algorithms to compute discrete solutions relying on a two-step approach (first compute a general decomposition and then discretize it).
However, it is not clear when a decomposition can be discretized with invariance transformations so that the two-step approach may be inapplicable in some cases.
The aim of this section is \emph{not} to describe algorithms for transforming general decompositions into discrete decompositions but we propose a necessary criterion for a decomposition to be discretizable.

The criterion draws upon the observations made in Section~\ref{subsec-perm}: if $(\UU_\myF,\VV_\myF,\WW_\myF)$ and $({\UU'}_\myF,{\VV'}_\myF,{\WW'}_\myF)$ are (scaling+trace)-equivalent, then the families $\MM_\myF$ and ${\MM'}_\myF$, defined by $\MM_r^{}=\WW_r^{}\VV_r^{}\UU_r^{}$ and $\MM'_r=\WW'_r\VV'_r\UU'_r$, are simultaneously similar (Definition~\ref{def-simul-sim}).
In particular, $\sum_{r=1}^F \beta_r\MM_r$ and $\sum_{r=1}^F \beta_r\MM'_r$ are also similar for every coefficients $\beta_r\in\Re$ (cf.~Remark~\ref{rem-sim}) and thus they have the same characteristic polynomial.

Assume that $({\UU'}_\myF,{\VV'}_\myF,{\WW'}_\myF)$ is a discrete $F$-PD.
Then $\UU'_r\in(q\ZZb)^{p\times m}$, $\VV'_r\in(q\ZZb)^{n\times p}$ and $\WW'_r\in(q\ZZb)^{m\times n}$ for some $q\in\Re$.
Hence, $\MM'_r\in(q^3\ZZb)^{m\times m}$ for every $1\leq r\leq F$.
Let the coefficients $\beta_r$ in the paragraph above be integers.
If we denote the characteristic polynomial of $\frac{1}{q^3}\sum_{r=1}^F \beta_r\MM_r$ by
\begin{equation}\label{eq-charpol}
\begin{split}
p(t) &= p(t;\beta_1,\ldots,\beta_F) \\
&= \det\left(tI-\frac{1}{q^3}\sum_{r=1}^F \beta_r\MM_r\right) = t^m + \alpha_{m-1} t^{m-1} + \ldots + \alpha_0,
\end{split}
\end{equation}
it is not hard to see that the coefficients $\alpha_i\in\ZZb$ for every $0\leq i<m$.

\begin{definition}\label{def-char}
Let the matrices $\MM_r$ be defined as above.
We say that the decomposition $(\UU_\myF,\VV_\myF,\WW_\myF)$ satisfies the \emph{discretizability criterion with parameter $q$} if for every integer coefficients $\beta_r$, $1\leq r\leq F$, the coefficients of the characteristic polynomial \eqref{eq-charpol} satisfy $\alpha_i\in \ZZb$ for every $0\leq i< m$.
\end{definition}

From the developments above, it is clear that satisfying the discretizability criterion with some parameter $q\in\Re$ is a necessary condition for being discretizable.
In the following section, we will see that most of the sample decompositions on which we have performed numerical experiments do not satisfy the discretizability criterion with $q=1$ or $q=1/2$ for tensors larger than the $2\times2$ by $2\times2$ case, contrasting with the abundance in the literature of discrete decompositions with $q=1$ or $q=1/2$ for these tensors (see also Section~\ref{sec-intro}).

\section{Numerical experiments}\label{sec-num}

We have applied the results of Sections~\ref{sec-equiv} and~\ref{sec-discrete} on large sample sets of decompositions for matrix multiplication tensors up to the $m=p=n=3$ case.
The goal is to get for the first time a view on the distributions of essentially unique decompositions and the distributions of discretizable decompositions: how many essentially unique decompositions do there exist?
If two different decompositions are computed with a numerical algorithm, are they likely to be equivalent?
Likely to be discretizable for some given $q$?

The way to obtain these samples is described in the next subsection.
The reason we restrict to cases smaller than or equal to the $m=p=n=3$ case is explained in the next subsection as well.
All computations were performed in Matlab.
The computation-intensive part, namely, the generation of the samples, was executed on a Linux machine with 28 cores and 128 GBytes of RAM.
The other computations were done on a laptop having 4 cores and 16 GBytes of RAM running Linux.

\subsection{Computing polyadic decompositions}

In the numerical experiments, we considered the six different cases $(m,p,n;F)$ summarized in Table~\ref{tab-cases-num} (first two columns), where $(m,p,n)$ is the size of the matrix multiplication tensor and $F$ is the number of rank-$1$ terms, i.e., we considered $F$-PDs of $\PhiMat$.
For each $(m,p,n)$, the associated $F$ is the smallest $F$ for which we know there exists in the literature a decomposition of $\PhiMat$ with $F$ terms (see, e.g., \cite{BalBen16,Smi13}).%
\footnote{%
Note that for the first four cases in Table~\ref{tab-cases-num}, $F$ is equal to the rank of the associated tensor and thus cannot be decreased; see, e.g., \cite[Chapter~15]{BurCla13} for the $(1,2,1)$ and $(2,1,2)$ cases; for $(2,2,2)$, see \cite[Theorem~11]{BroDob78}; and for $(2,3,2)$, see \cite{Ale15}.
For the $(3,2,3)$ case, the best known lower bound on the rank of $\Phi_{3,2,3}$ is $\rank(\Phi_{3,2,3})\geq14$ (see, e.g., \cite[Theorem~11]{BroDob78}), and for $(3,3,3)$ the best known lower bound is $\rank(\Phi_{3,3,3})\geq19$, shown by Bl\"aser \cite{Bla03}.
However, no $F$-term polyadic decompositions of $\Phi_{3,2,3}$ and $\Phi_{3,3,3}$ with $F<15$ and $F<23$ respectively are known for the moment.%
}
For each case, we want to obtain large sets of decompositions on which to apply the results of Sections~\ref{sec-equiv} and~\ref{sec-discrete}.

Computing polyadic decompositions of matrix multiplication tensors is notoriously difficult (see, e.g., \cite{Smi13,TicPha17} and references therein).
Quite a few papers in the literature about tensor decompositions are devoted to this specific problem.
For the numerical experiments of this paper, we have used the method proposed by Tichavsk{\'y} et al.~\cite{TicPha17} to compute $N_s=10\,000$ samples (decompositions) for the six cases listed in Table~\ref{tab-cases-num}.
For an alternative method, we refer the reader to \cite{Smi13}.
See also \cite{BenBal15}.
We have used $\UU_{r,0}\in\Re^{p\times m}$, $\VV_{r,0}\in\Re^{n\times p}$ and $\WW_{r,0}\in\Re^{m\times n}$ with entries chosen uniformly at random in $[-1,1]$ as initial iterates for Tichavsk{\'y} et al.'s method.
The method does not always converge to a global minimum; hence we sometimes had to try more than one initial iterate to converge to an exact solution (the third and fourth columns give an idea of the effort required to compute the $N_s$ decompositions).
In the end, we have at our disposal for each case $N_s$ samples of $F$-term polyadic decompositions of $\PhiMat$.
We denote them by $({\UU^\kappa}_\myF,{\VV^\kappa}_\myF,{\WW^\kappa}_\myF)$ with $\kappa\in\{1,\ldots,N_s\}$.

In the numerical computations, the tensors $\Phi\in\Bil(\Re^{m\times p},\Re^{p\times n};\Re^{m\times n})$ are represented by the three-dimensional arrays $\PhiT\in\Re^{mp\times pn\times nm}$ obtained from the canonical identifications $\Re^{m\times p}\cong\Re^{mp}$, etc.
Regarding floating-point arithmetic limitations, a sample $({\UU^\kappa}_\myF,{\VV^\kappa}_\myF,{\WW^\kappa}_\myF)$ is considered as an $F$-PD of $\PhiMat$ if
\[
\lvert\PhiT^\kappa(i,j,k)-\PhiT_{m,p,n}(i,j,k)\rvert< 10^{-9} \qquad \forall i,j,k
\]
where $\Phi^\kappa$ is the tensor defined by \eqref{eq-FPD} and \eqref{eq-trace} with $({\UU^\kappa}_\myF,{\VV^\kappa}_\myF,{\WW^\kappa}_\myF)$.

\begin{table}
\centering
\renewcommand{\arraystretch}{1.1}
\begin{tabular}{@{}ll@{\hspace{35pt}}l@{\hspace{25pt}}l@{}}
\toprule
$(m,p,n)$ & $F$ & \# trials & Elapsed time [hours] \\
\midrule
$(1,2,1)$ & $2$ & $10\,000$ & $0.017$ \\
$(2,1,2)$ & $4$ & $10\,000$ & $0.05$ \\
$(2,2,2)$ & $7$ & $10\,762$ & $0.32$ \\
$(2,3,2)$ & $11$ & $10\,133$ & $0.51$ \\
$(3,2,3)$ & $15$ & $18\,003$ & $40$ \\
$(3,3,3)$ & $23$ & $15\,829$ & $16.8$ \\
\bottomrule
\end{tabular}
\vspace{3pt}
\caption{First and second columns: different cases considered in the numerical experiments.
Fourth column: total time required to compute the $N_s=10\,000$ decompositions with Tichavsk{\'y} et al.'s method \cite{TicPha17}.
Third column: number of randomly generated initial guesses (trials) $(\UU_{\myF,0},\VV_{\myF,0},\WW_{\myF,0})$ we had to use to compute the $N_s=10\,000$ samples.}
\label{tab-cases-num}
\end{table}

\begin{remark}\label{rem-limit333}
For matrix multiplication tensors larger than the $(3,3,3)$ case, it becomes very difficult to compute polyadic decompositions of these tensors: the global convergence of the algorithm decreases significantly while the cost for a single iteration of Tichavsk{\'y} et al.'s method grows as $\BigO([F(mp+pn+nm)]^3)$.
It becomes thus unrealistic to compute large sets of decompositions for these tensors.~\qedfill
\end{remark}

\subsection{Discretizable decompositions}\label{sec-num-disc}

We start with the analysis of the discretizability property of the sample decompositions $({\UU^\kappa}_\myF,{\VV^\kappa}_\myF,{\WW^\kappa}_\myF)$.
For instance, we would like to find the decompositions that are not equivalent to a discrete decomposition with $q=1/2$ (Definition~\ref{def-discr}).
To do this, we will apply the necessary criterion for discretizability with parameter $1/2$.

For every $\kappa\in\{1,\ldots,N_s\}$, let $p(t)=p(t;\beta_1,\ldots,\beta_F)$ be as in \eqref{eq-charpol} where the $\beta_r$'s are randomly chosen integer coefficients.
In our experiments, we use $16$ sets of coefficients sampled uniformly at random in $\{-5,-4,\ldots,5\}^F$, providing thus $16$ polynomials
\[
p^\kappa_j(t) = t^m + \alpha^\kappa_{j,m-1} t^{m-1} + \ldots + \alpha^\kappa_{j,0},\qquad j=1,\ldots,16.
\]
For each $\kappa\in\{1,\ldots,N_s\}$, we let
\[
\ND_\kappa = \max_{1\leq j\leq 16} \:\max_{0\leq i< m} \:\lvert \alpha_{j,i}^\kappa - \round(\alpha_{j,i}^\kappa) \rvert
\]
where $\round(\alpha)$ is the closest integer to $\alpha$.
The value of $\ND_\kappa$ is thus a measure of how close are the polynomials $p^\kappa_j(t)$ to polynomials with integer coefficients.
From the results of Section~\ref{sec-discrete}, a decomposition $({\UU^\kappa}_\myF,{\VV^\kappa}_\myF,{\WW^\kappa}_\myF)$ for which $\ND_\kappa$ is (significantly) nonzero is not equivalent to a discrete decomposition with $q=1/2$.

The histograms in Figure~\ref{fig-int-dev} show the distribution of decompositions based on the value of $\ND_\kappa$.
More precisely, each bar of the histograms represents the number of decompositions $({\UU^\kappa}_\myF,{\VV^\kappa}_\myF,{\WW^\kappa}_\myF)$ with $\ND_\kappa$ in the corresponding range.
For the $(1,2,1)$, and $(2,2,2)$ cases, we observe that, for all the decompositions, the polynomials $p^\kappa_j(t)$ have \emph{integer} coefficients (within a very small tolerance).
Hence, $100\%$ of the decompositions satisfy the discretizability criterion with $q=1/2$.
This is not surprising since all the $2$-PDs (resp.~$7$-PDs) of $\Phi_{1,2,1}$ (resp.~$\Phi_{2,2,2}$) are equivalent (see \cite{deG78b} and Remark~\ref{rem-equiv-theory}), and $\Phi_{1,2,1}$ (resp.~$\Phi_{2,2,2}$) admits a discrete decomposition with $q=1$.\footnote{%
For $\Phi_{1,2,1}$, take, e.g.,
\vspace{-7pt}
\begin{align*}
\UU_1 = [1,0],\quad \VV_1 = [1,0]^\top,\quad \WW_1 = 1, \\
\UU_2 = [0,1],\quad \VV_2 = [0,1]^\top,\quad \WW_2 = 1,
\end{align*}
(see also Remark~\ref{rem-equiv-theory}).
For $\Phi_{2,2,2}$, take, e.g., Strassen's algorithm.%
}

In contrast, for the $(2,1,2)$, $(2,3,2)$, $(3,2,3)$ and $(3,3,3)$ cases, we observe that most of the decompositions do not satisfy the necessary criterion for discretizability with $q=1/2$.
This implies that most of the decompositions are not equivalent to a discrete decompositions with parameter $q=1/2$.
This last observation has to be put in contrast with the abundance of decompositions $(\UU_\myF,\VV_\myF,\WW_\myF)$ for which the entries of $\UU_r$, $\VV_r$ and $\WW_r$ belong to $\{0,\pm1/2,\pm1\}$ in the literature \cite{BalBen16,Lad76,TicPha17,OhJin13}.

\begin{figure}
\centering
\begin{tabular}{@{}c@{\;\;\;}c@{}}
$(1,2,1)$ & $(2,1,2)$ \\[2pt]
\includegraphics[height=4.5cm]{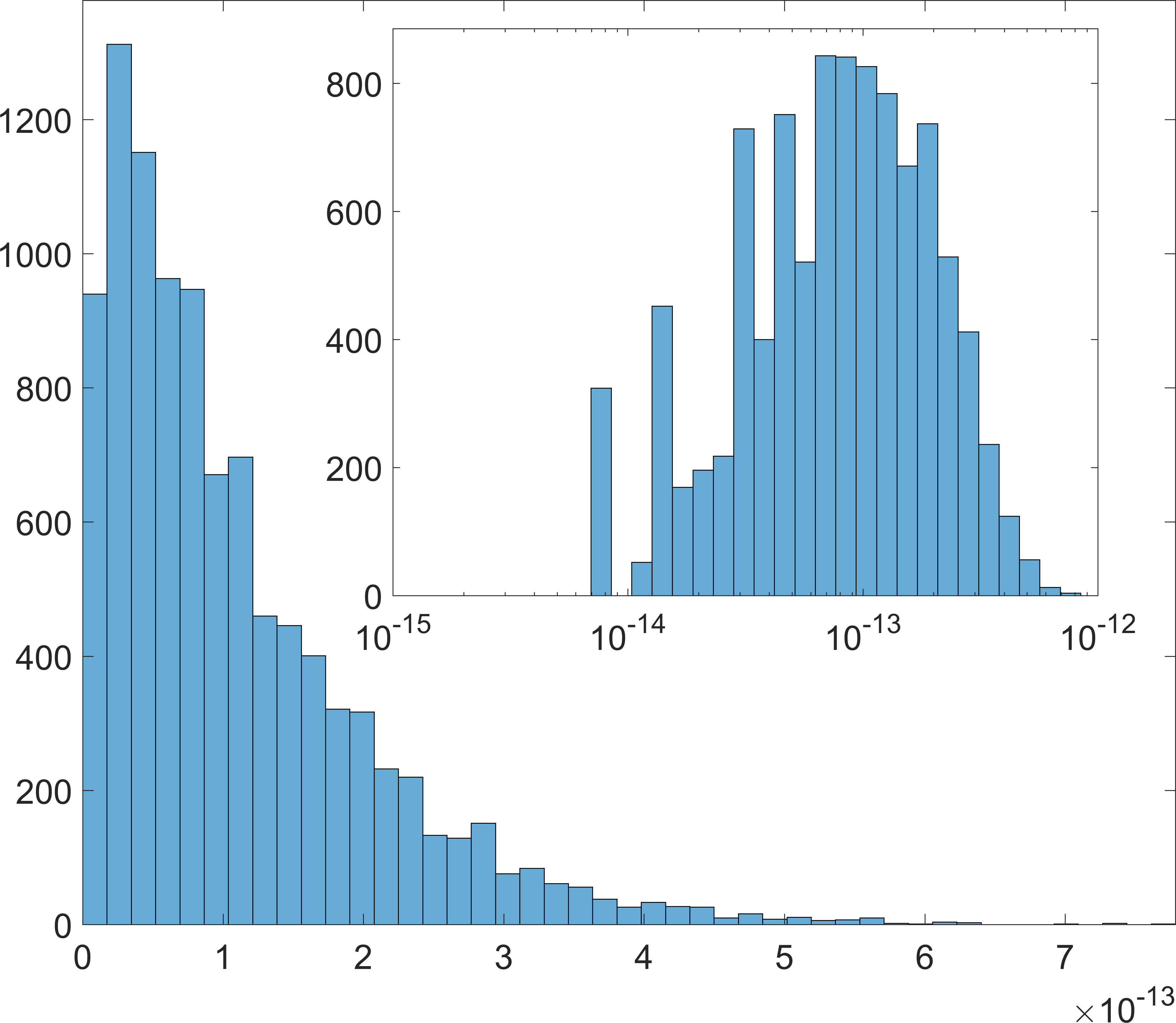} &
\includegraphics[height=4.5cm]{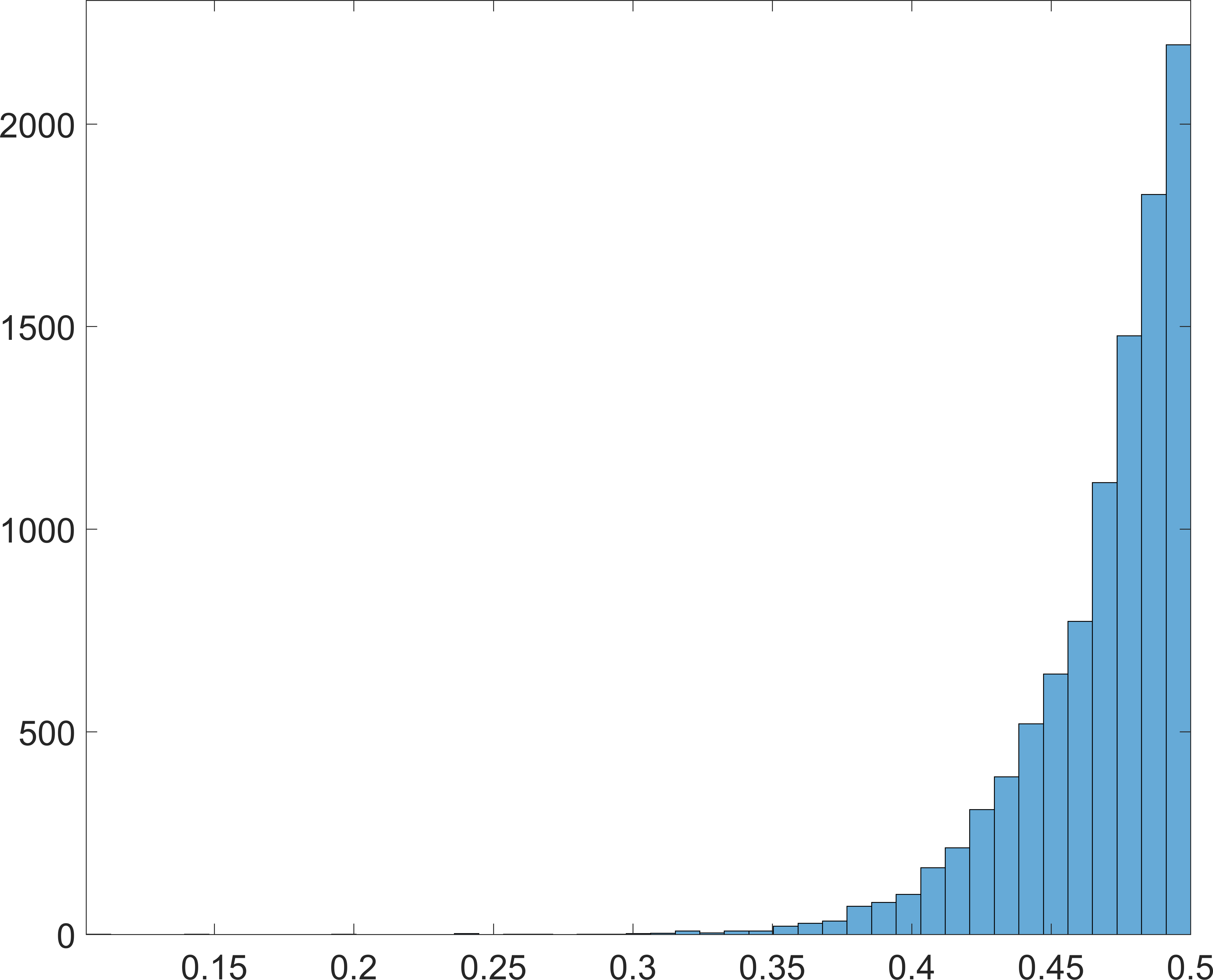} \\[5pt]
$(2,2,2)$ & $(2,3,2)$ \\[2pt]
\includegraphics[height=4.5cm]{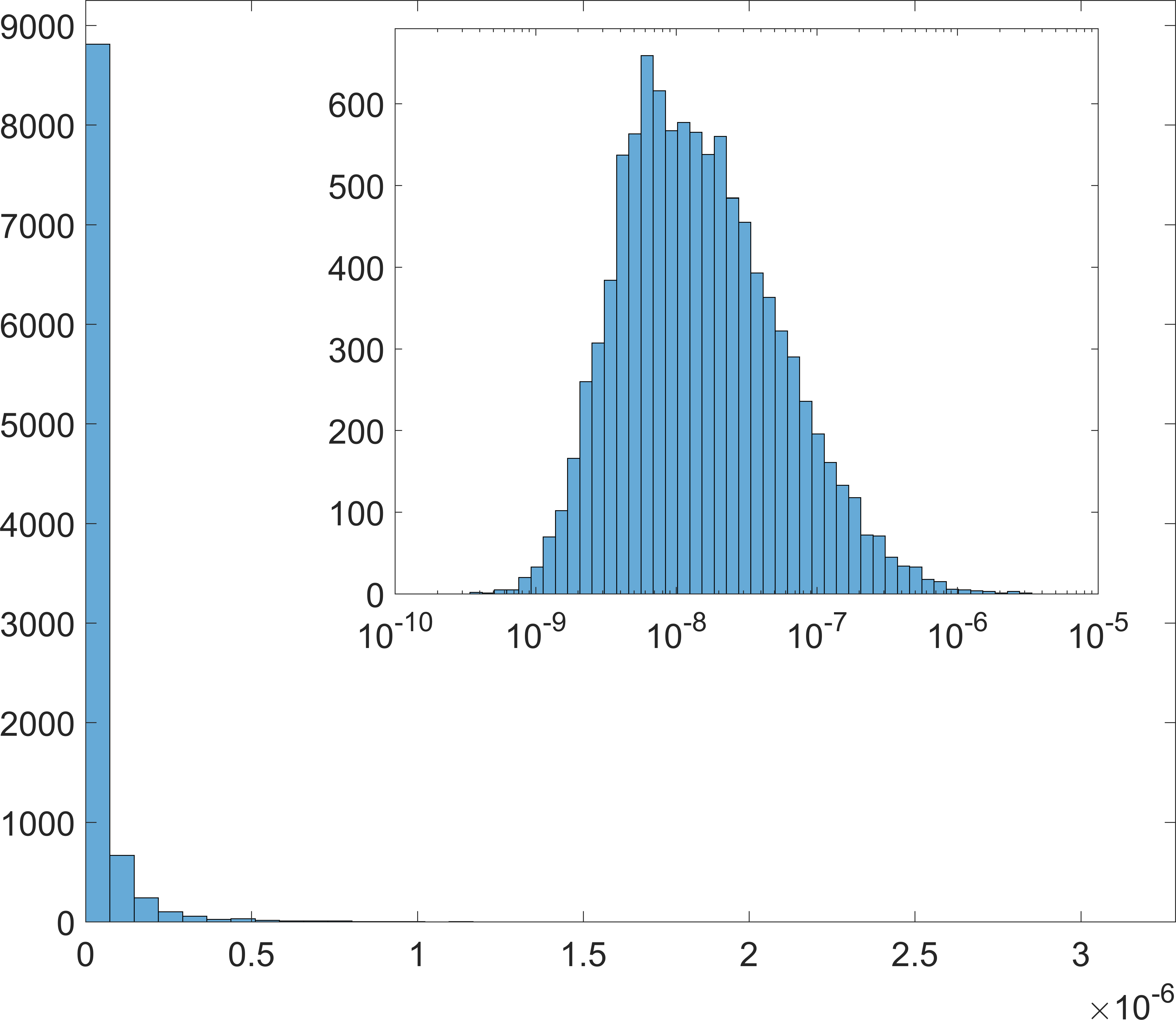} &
\includegraphics[height=4.5cm]{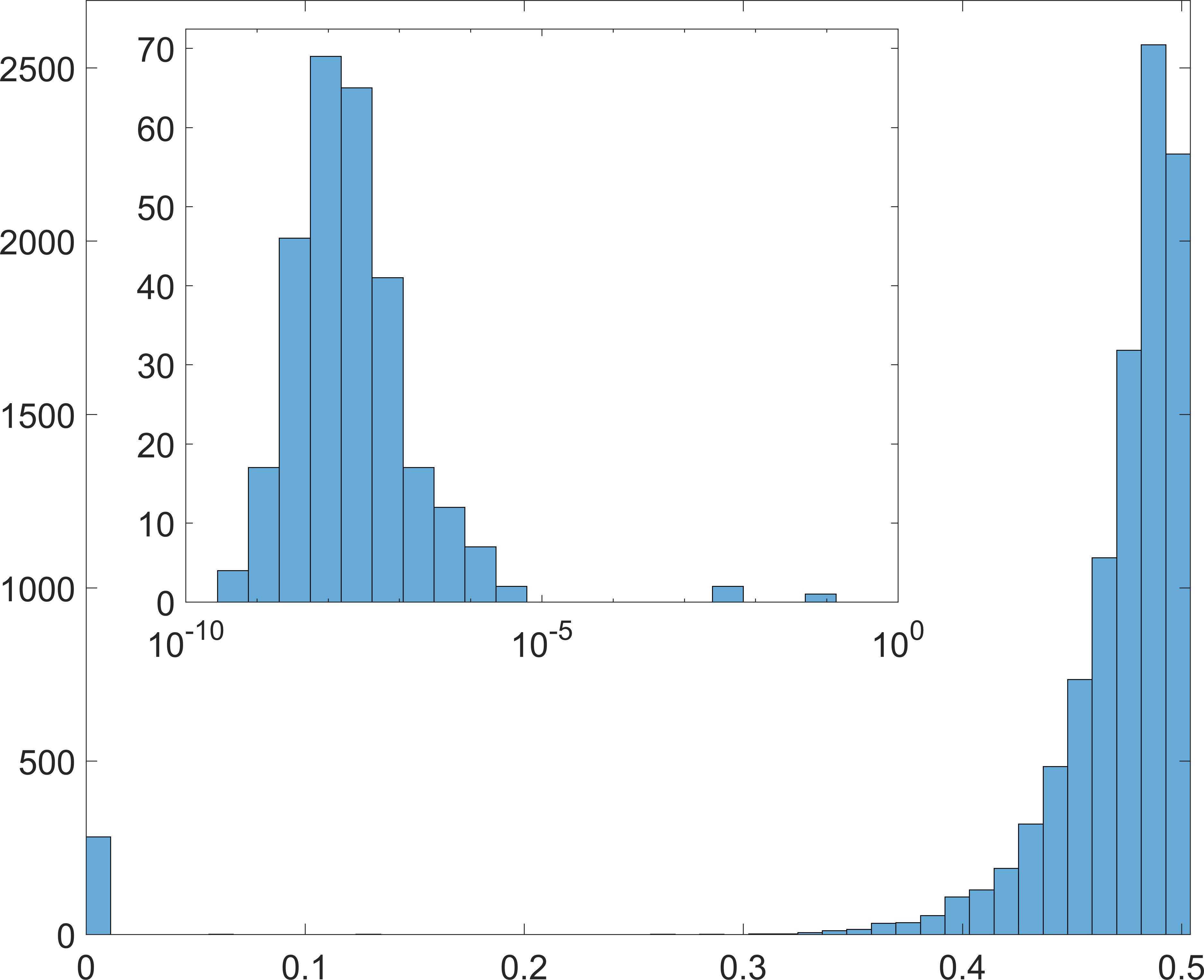} \\[5pt]
$(3,2,3)$ & $(3,3,3)$ \\[2pt]
\includegraphics[height=4.5cm]{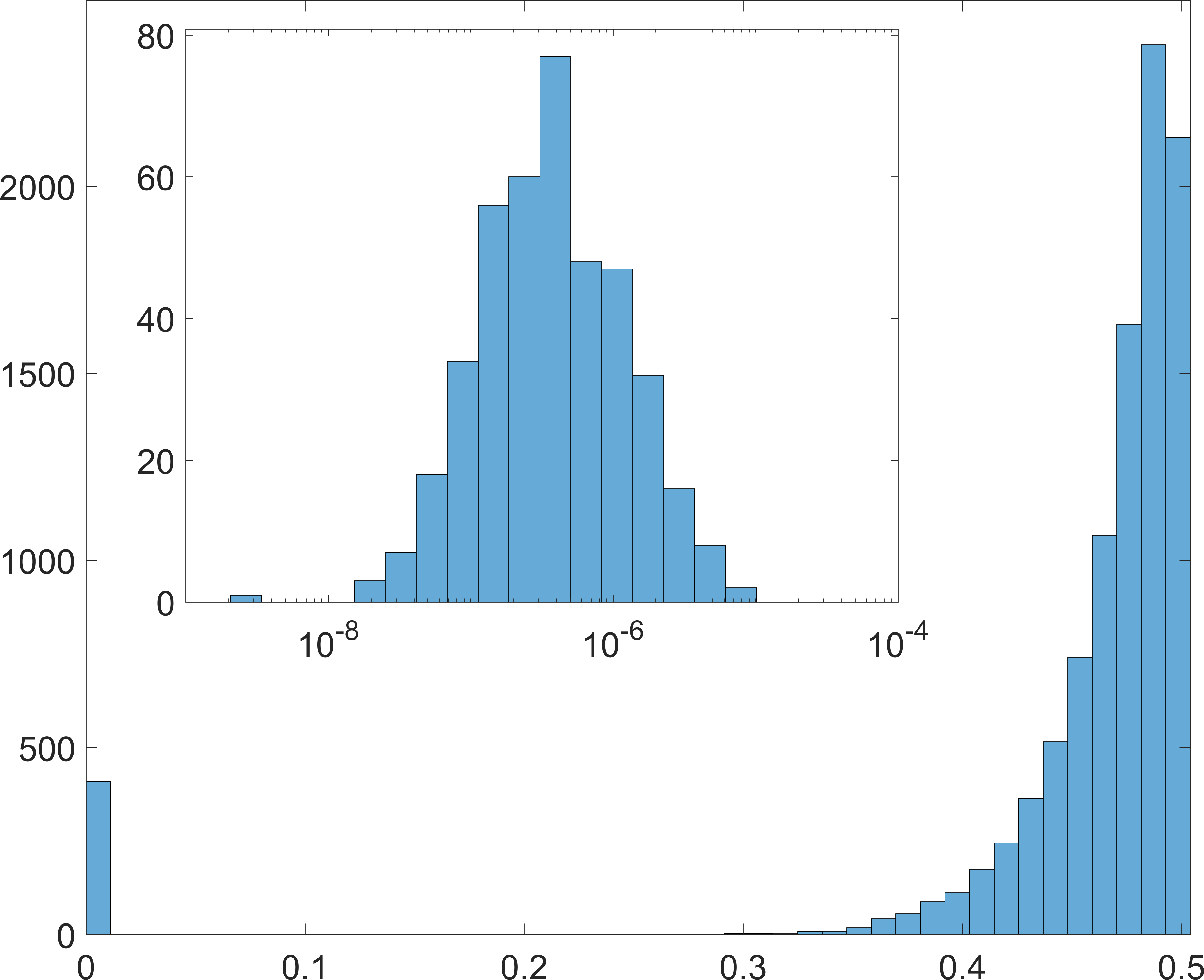} &
\includegraphics[height=4.5cm]{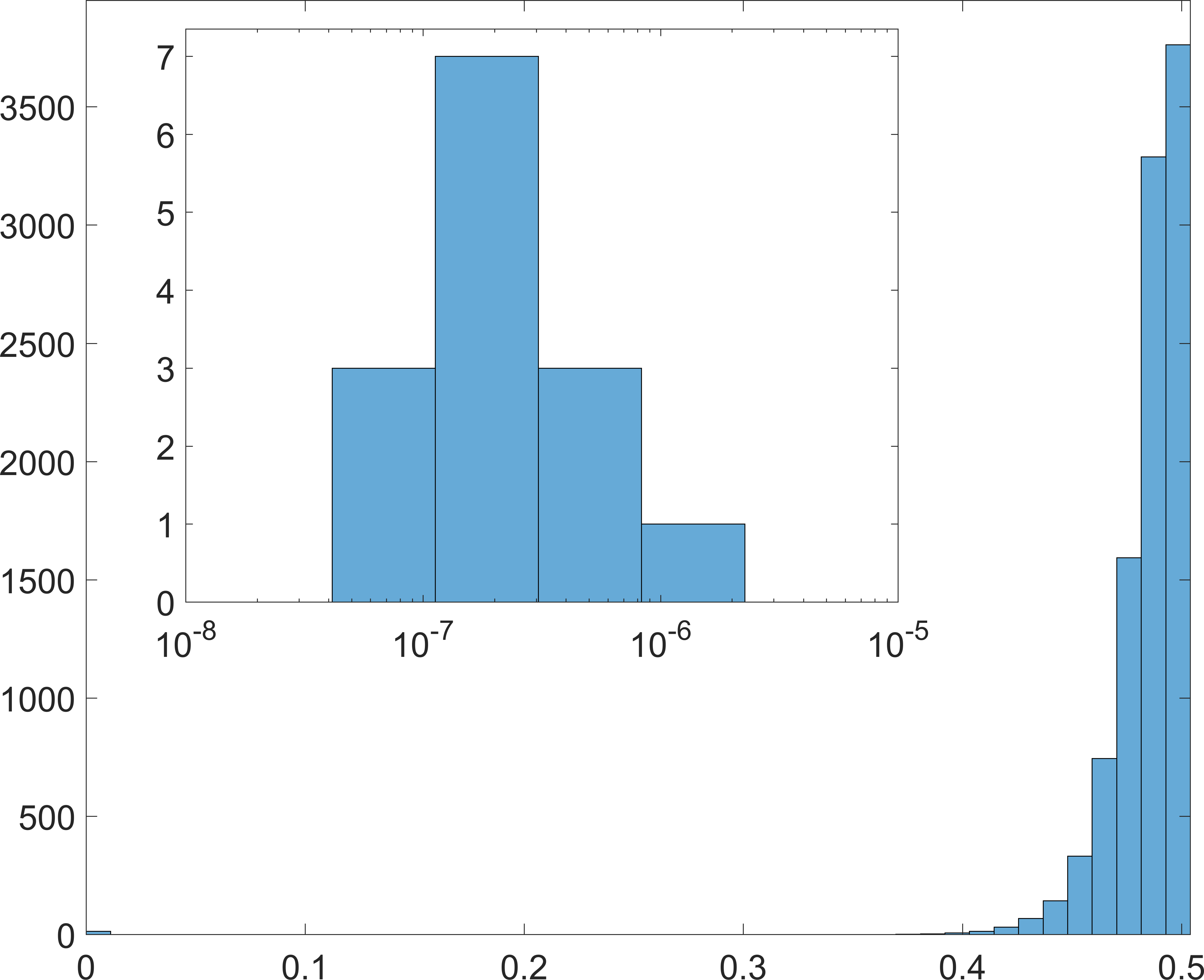}
\end{tabular}
\caption{Distribution of decompositions based on how close the polynomials $p^\kappa_j(t)$ are to characteristic polynomials with integer coefficients.
Horizontal axis: $\ND_\kappa$.
Vertical axis: \# decompositions $({\UU^\kappa}_\myF,{\VV^\kappa}_\myF,{\WW^\kappa}_\myF)$ with $\ND_\kappa$ in the corresponding range.
Remember that the total number of decompositions is equal to $N_s=10\,000$.
The insets provide a zoom on the decompositions $({\UU^\kappa}_\myF,{\VV^\kappa}_\myF,{\WW^\kappa}_\myF)$ with $\ND_\kappa<0.1$.
Note the logarithmic scale of the horizontal axis in the inset.}
\label{fig-int-dev}
\end{figure}

Further experiments can be conducted to investigate the discretizability of the decompositions with respect to other parameters $q$.
However, due to space limitations, we do not present them in this paper.

\subsection{Equivalence classes of decompositions}\label{sec-num-equi}

In the previous subsection, we have seen that, except for the $(1,2,1)$ and $(2,2,2)$ cases, most of the decompositions are not equivalent to a discrete decomposition with coefficients in $\{0,\pm1/2,\pm1\}$.
In this section, we will analyze the pairwise equivalence of the decompositions.
This will reveal the distributions of the equivalence classes among the sample sets of decompositions.
Therefore, we use the algorithm developed in Section~\ref{sec-equiv}.

First, in order to apply Algorithm~\ref{algo-equiv}, we need to ensure that Assumption~\ref{ass-cluster} is satisfied for every decomposition $({\UU^\kappa}_\myF,{\VV^\kappa}_\myF,{\WW^\kappa}_\myF)$, $\kappa\in\{1,\ldots,N_s\}$.
Therefore, for each decomposition $({\UU^\kappa}_\myF,{\VV^\kappa}_\myF,{\WW^\kappa}_\myF)$, we have computed (using Theorem~\ref{thm-lin-cluster-gen}) the \emph{clustering vector} of the decomposition defined as the vector $[\cl(\UUt^\kappa),\cl(\VVt^\kappa),\cl(\WWt^\kappa)]$.
The results are summarized in Figure~\ref{fig-cluster-number}.
As we can see, for each case, $100\%$ of the decompositions have at least one matrix $\UUt^\kappa$, $\VVt^\kappa$ or $\WWt^\kappa$ with clustering number equal to one and thus satisfy Assumption~\ref{ass-cluster}.

\begin{figure}
\centering
\begin{tabular}{@{}c@{\;}c@{\;}c@{}}
$(1,2,1)$ & $(2,1,2)$ & $(2,2,2)$ \\[-8pt]
\tikz[baseline=(current bounding box.north)] \node[anchor=north]{\includegraphics[width=4cm]{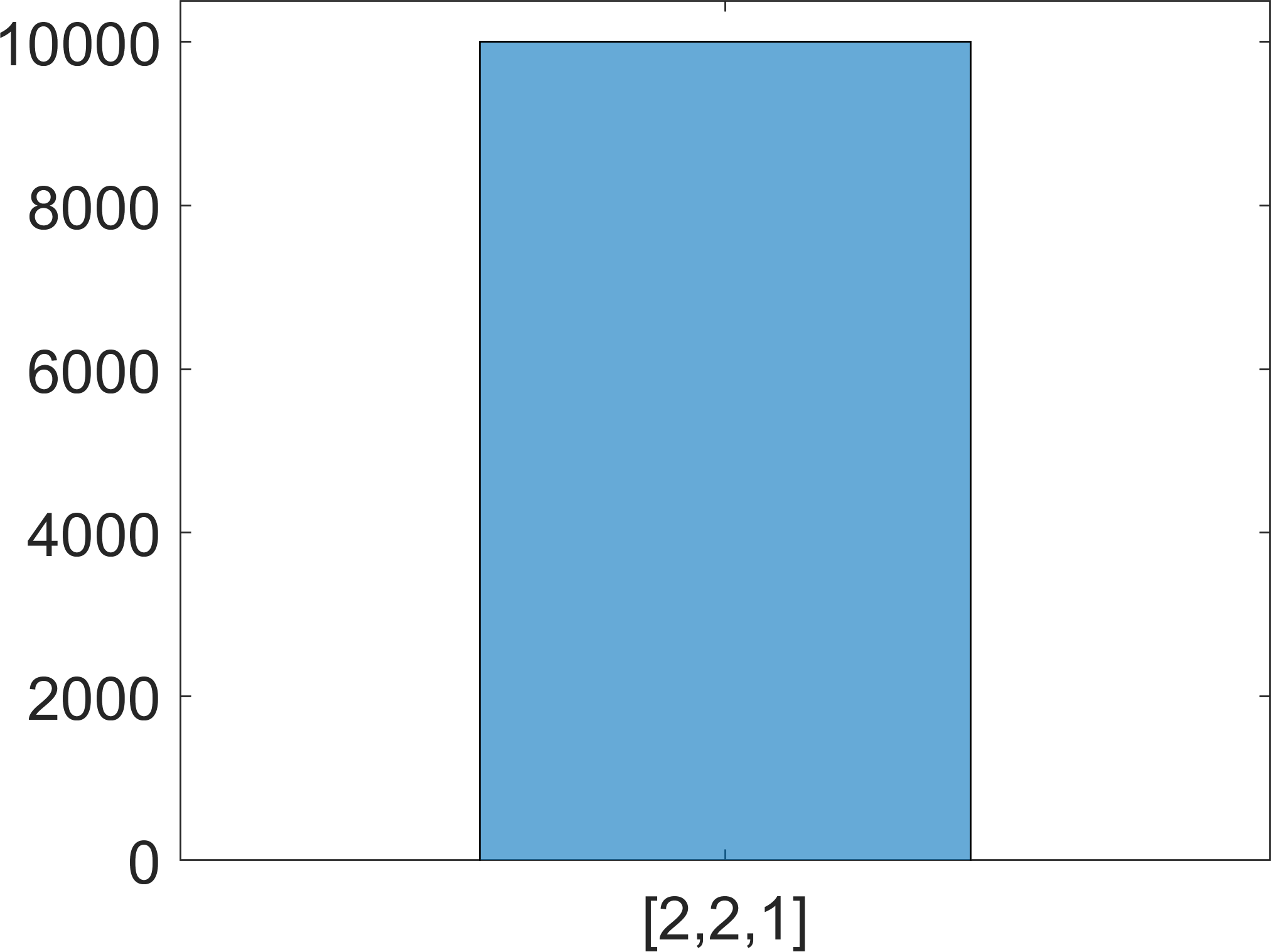}}; &
\tikz[baseline=(current bounding box.north)] \node[anchor=north]{\includegraphics[width=4cm]{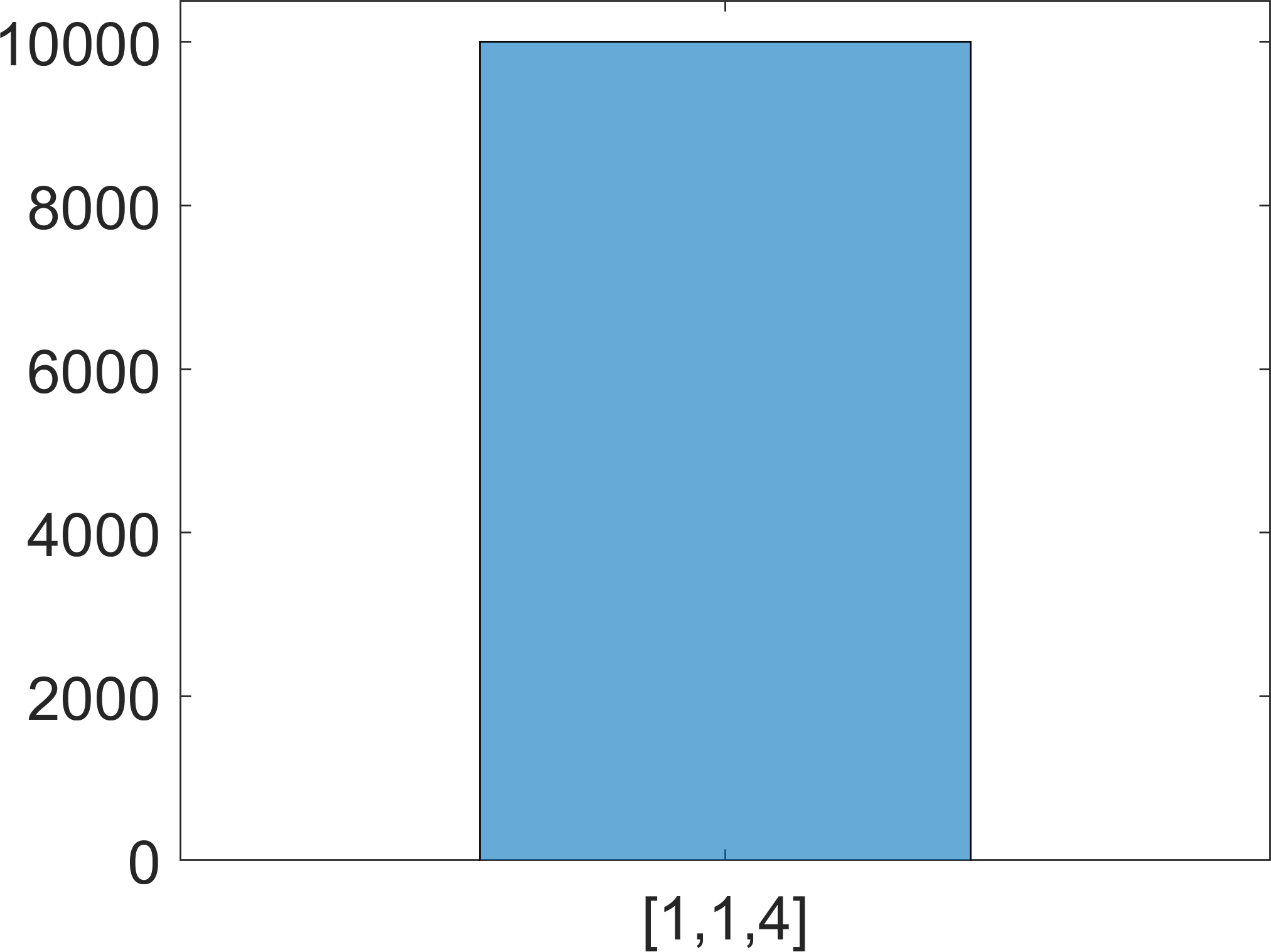}}; &
\tikz[baseline=(current bounding box.north)] \node[anchor=north]{\includegraphics[width=4cm]{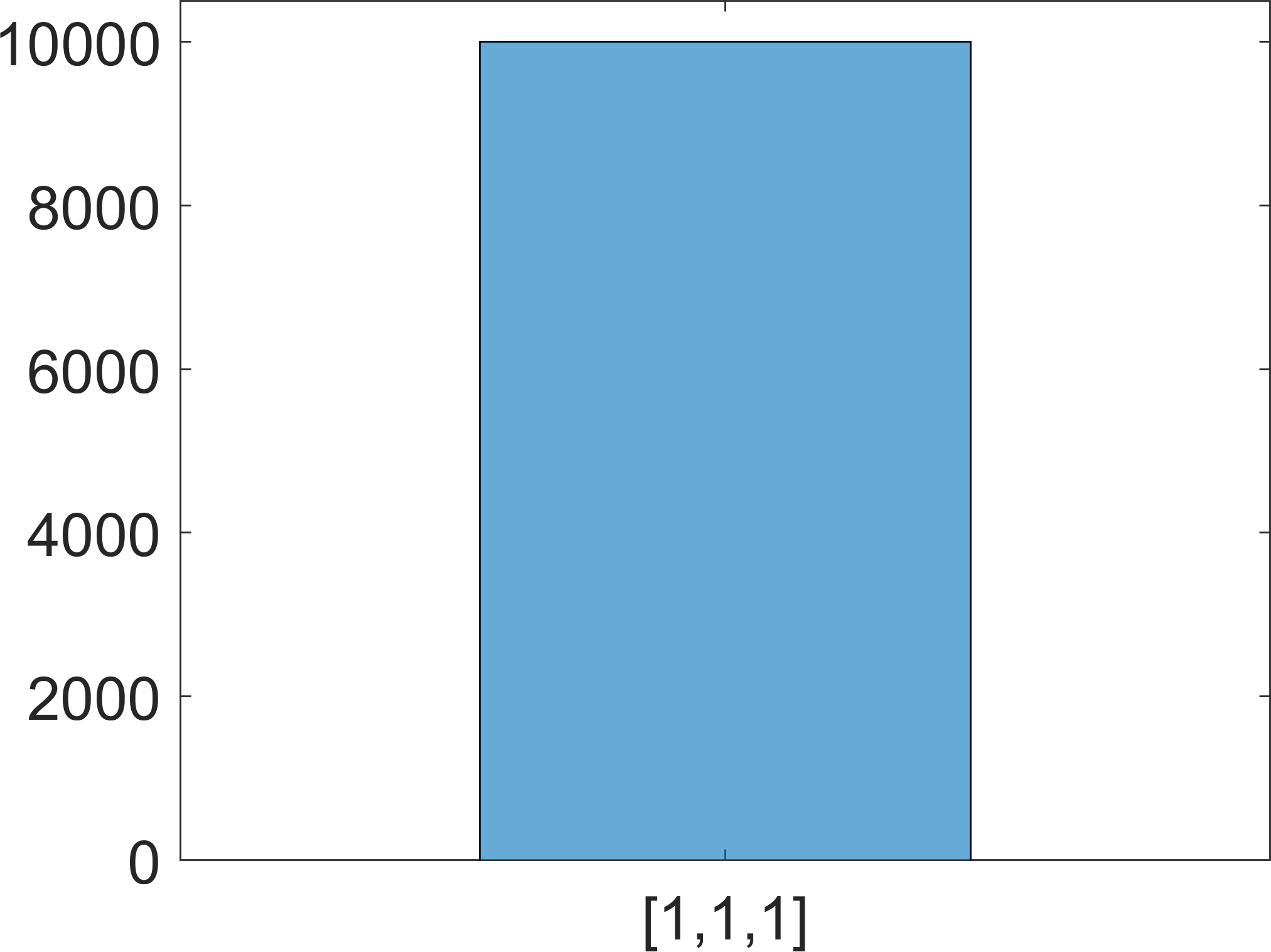}}; \\
\\[-6pt]
$(2,3,2)$ & $(3,2,3)$ & $(3,3,3)$ \\[-8pt]
\tikz[baseline=(current bounding box.north)] \node[anchor=north]{\includegraphics[width=4cm]{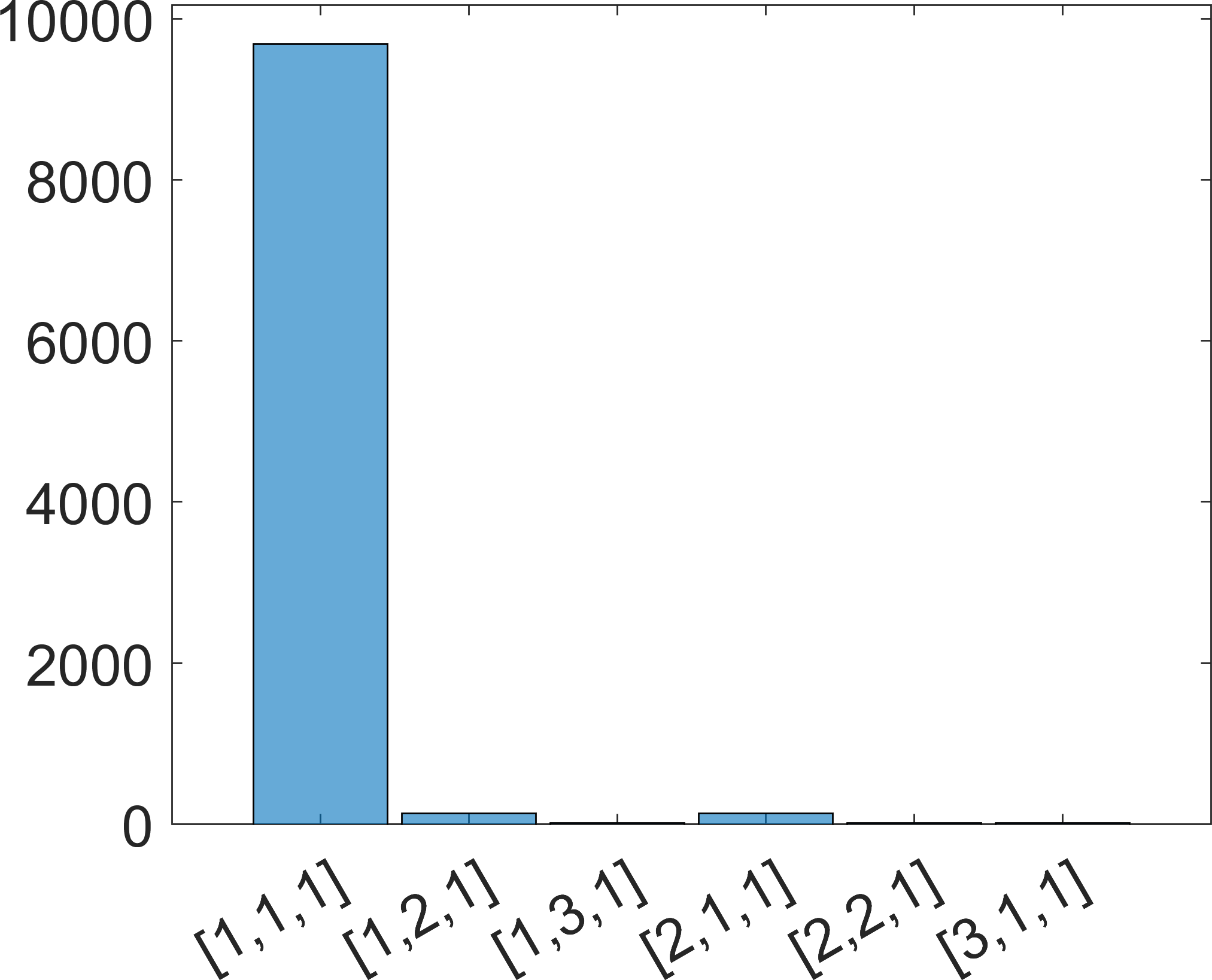}}; &
\tikz[baseline=(current bounding box.north)] \node[anchor=north]{\includegraphics[width=4cm]{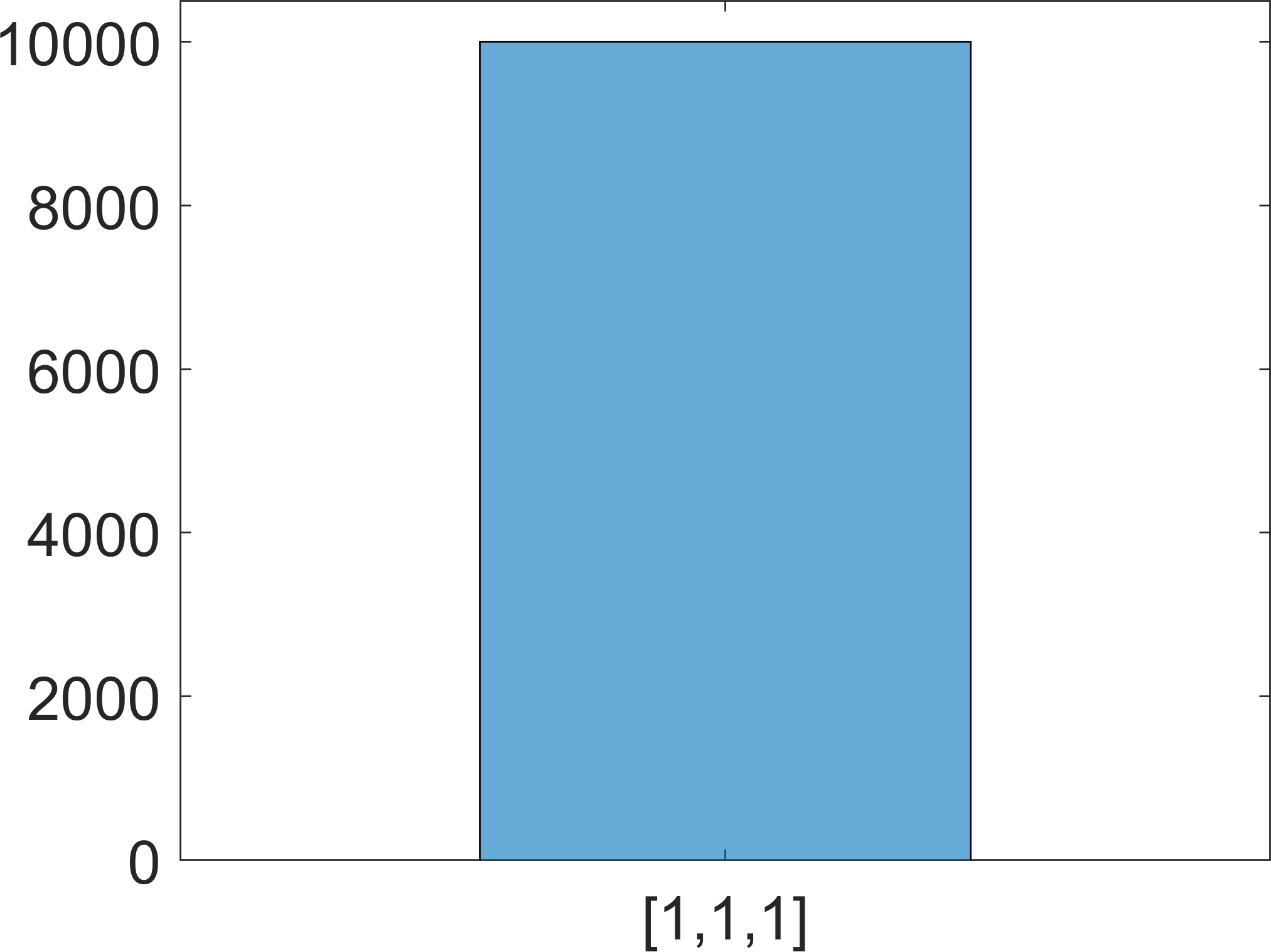}}; &
\tikz[baseline=(current bounding box.north)] \node[anchor=north]{\includegraphics[width=4cm]{cluster_number_vectors_3x2x3}};
\end{tabular}
\caption{Clustering vectors $[\cl(\UUt^\kappa),\cl(\VVt^\kappa),\cl(\WWt^\kappa)]$ of the decompositions.}
\label{fig-cluster-number}
\end{figure}

We can thus apply Algorithm~\ref{algo-equiv} to check the equivalence between pairs of decompositions for the different cases.
The results are gathered in Table~\ref{tab-equiv}.
We observe that for the $(1,2,1)$ and $(2,2,2)$ cases, every decompositions are pairwise equivalent (see also Remark~\ref{rem-equiv-theory}).
As a consequence, it is not surprising that all the decompositions are discretizable.
This situation is more surprising for $(2,1,2)$, $(2,3,2)$, $(3,2,3)$ and $(3,3,3)$ cases.
For these cases, the decompositions seem to be pairwise equivalent with probability zero.

\begin{table}
\centering
\renewcommand{\arraystretch}{1.2}
\newcommand{\mysep}{3mm}
\begin{tabular}{@{}>{\itshape}l@{\hspace{23pt}}l@{\hspace{15pt}}l@{\hspace{15pt}}@{}ll@{}}
\toprule
&
\multirow{2}{*}{\renewcommand{\arraystretch}{1}\begin{tabular}{@{}l@{}} Percentage of\\equivalent pairs\end{tabular}}
&
\multirow{2}{*}{\renewcommand{\arraystretch}{1}\begin{tabular}{@{}l@{}} Mean elapsed\\time [sec] \end{tabular}}
&
\multicolumn{2}{c}{Depth$^*$}\\
\cmidrule{4-5}
&&& Max. & Mean \\
\midrule
$(1,2,1)$ & $100\%$ & $4.74\cdot10^{-4}$ & 1 & 1 \\
$(2,1,2)$ & $0\%$ & $1.02\cdot10^{-3}$ & 1 & 1 \\
$(2,2,2)$ & $100\%$ & $2.68\cdot10^{-3}$ & 4 & 0.55 \\
$(2,3,2)$ & $0\%$ & $4.94\cdot10^{-3}$ & 6 & 1.08 \\
$(3,2,3)$ & $0\%$ & $2.66\cdot10^{-2}$ & 9 & 2.91 \\
$(3,3,3)$ & $0\%$ & $2.82\cdot10^{-2}$ & 5 & 1.51 \\
\bottomrule
\end{tabular}
\vspace{3pt}
\caption{Equivalence of decompositions.
For each case, we have used Algorithm~\ref{algo-equiv} to check the equivalence between $10\,000$ randomly chosen pairs of decompositions inside the cluster.
The second column gives the percentage of pairs of equivalent decompositions.
The third column gives the average time required to check the equivalence of the decompositions with Algorithm~\ref{algo-equiv}.
$^*$The depth of the algorithm is the maximal length of a partial permutation $\pi\in\Inj(n,F)$ that is rejected (see Algorithm~\ref{algo-equiv}).}
\label{tab-equiv}
\end{table}

The third column of Table~\ref{tab-equiv} gives the average computation time to check the equivalence between two decompositions.
We observe that the algorithm takes no more than $30~\text{ms}$.
In comparison, for the $(3,3,3)$ case for example, the naive method (testing all possible permutations) would have required to test condition $[\clubsuit]$ in Algorithm~\ref{algo-equiv}, which has a complexity of $\BigO([F\max\{mp,pn,nm\}]^3)$, $23!=2.59\cdot10^{22}$ times.

\begin{remark}\label{rem-equiv-theory}
It can be shown that all the decompositions of $\Phi_{1,2,1}$ and $\Phi_{2,2,2}$ (respectively) are pairwise equivalent, which corroborates the results of the numerical experiments using Algorithm~\ref{algo-equiv}.
For the $(2,2,2)$ case, we refer the reader to \cite{deG78b}.
For the $(1,2,1)$ case, let $(\UU_\myF,\VV_\myF,\WW_\myF)$ and $({\UU'}_\myF,{\VV'}_\myF,{\WW'}_\myF)$ be two $2$-PDs of $\Phi_{1,2,1}$.
Observe that $\Phi_{1,2,1}$ maps $2$-dimensional vectors to their scalar product and thus can be represented with the identity matrix:
\[
\Phi_{1,2,1}(u,v) = u^\top\left[
\begin{array}{cc}
1 & 0 \\
0 & 1 \\
\end{array}
\right]v.
\]
Since $(\UU_\myF,\VV_\myF,\WW_\myF)$ is a decomposition, it is not hard to see that
\[
\left[
\begin{array}{cc}
1 & 0 \\
0 & 1 \\
\end{array}
\right] = \UUt \: \diag(\WW_1,\WW_2) \: \VVt^\top
\]
where we remind that $\UUt$ and $\VVt$ are defined as \eqref{eq-UUt}.
Using a scaling transformation, we may assume that $\WW_1=\WW_2=1$.
Hence $\UUt$ and $\VVt^\top$ are inverses of each other, and so are $\UUt'$ and $\VVt'^\top$.
Then let $\PP=\UUt\VVt'^\top$ and observe that
\[
\begin{array}{@{}r@{$\:$}c@{$\:$}c@{$\:$}c@{$\:$}l}
\PP^{-1} \UUt &=& \big[\UUt'\UUt^{-1}\big] \, \UUt &=& \UUt', \\[5pt]
\PP^\top \VVt &=& \big[\VVt'\VVt^{-1}\big] \, \VVt &=& \VVt'. \\
\end{array}
\]
Hence, we have found a trace transformation with $\QQ=\RR=1\in\Gl(1)$ between the two decompositions.~\qedfill
\end{remark}

\section{Conclusions}\label{sec-concl}

In this paper, we have described an algorithm for efficiently deciding whether two decompositions of a given matrix multiplication tensor are equivalent through invariance transformations.
We have introduced the notion of clustering number of a matrix and we have demonstrated the correctness of the algorithm provided some conditions on the clustering number of the factor matrices of the decompositions are satisfied.
This condition was satisfied for $100\%$ of the numerical samples on which we have applied our algorithm.

The analysis of the equivalence classes of decompositions is relevant in the context of fast matrix multiplication as it sheds light on the diversity of essentially unique fast matrix multiplication algorithms.
In the numerical experiments we have performed, it appears that two decompositions are equivalent with probability zero (except for the multiplication of $1\times2$ by $2\times2$ matrices and the multiplication of $2\times2$ matrices for which we can prove the essential uniqueness of their decompositions) indicating that there are many essentially different algorithms for the fast multiplication of $3\times3$ matrices for example.

Drawing upon the observation that decompositions with coefficients in a discrete set provide fast matrix multiplication with better performance, we have also provided a necessary criterion for a decomposition to be equivalent to a decomposition with these properties.
We have applied the criterion on numerical samples and observed that the majority of the decompositions do not satisfy the criterion for being equivalent to a decomposition with coefficients in, e.g., $\{0,\pm1\}$ or $\{0,\pm1/2,\pm1\}$.

\section*{Acknowledgments}

The authors would like to thank Samuel Fiorini and Gwe\-na\"el Joret for insightful discussions on the link between the clustering number and matroids.

\bibliography{myrefs}

\end{document}